\documentclass{article}
\usepackage{nips10submit_e,times}

\usepackage{amsmath,amssymb,amsthm}
\usepackage[usenames,dvipsnames]{pstricks}
\usepackage{epsfig}
\usepackage{pst-grad}
\usepackage{pst-plot}
\usepackage{comment}

\title{Random Projection Trees Revisited}

\author{
Aman Dhesi\thanks{Work done as an undergraduate student at IIT Kanpur}\\
Department of Computer Science\\
Princeton University\\
Princeton, New Jersey, USA.\\
\texttt{adhesi@princeton.edu} \\
\And
Purushottam Kar\\
Department of Computer Science and Engineering\\
Indian Institute of Technology\\
Kanpur, Uttar Pradesh, INDIA.\\
\texttt{purushot@cse.iitk.ac.in} \\
}

\newcommand{\M}{{\cal M}}
\newcommand{\R}{{\mathbb R}}

\renewcommand{\Pr}[1]{{\mathbb P}\left[{#1}\right]}
\newcommand{\N}[2]{{\cal N}\left({#1},{#2}\right)}

\newcommand{\RP}{{\sc RPTree}}
\newcommand{\RPM}{{\sc RPTree-Max}}
\newcommand{\RPm}{{\sc RPTree-Mean}}
\newcommand{\OO}[1]{{\cal O}\left({#1}\right)}
\newcommand{\E}[1]{{\mathbb E}\left[{#1}\right]}

\newcommand{\parr}[1]{{#1}_\parallel}
\newtheorem{lem}{Lemma}
\newtheorem{thm}[lem]{Theorem}

\newtheorem{defn}[lem]{Definition}
\newtheorem{fct}[lem]{Fact}

\nipsfinalcopy

\bibliographystyle{unsrt}

\begin{document}

\maketitle

\begin{abstract}
The Random Projection Tree (\RP) structures proposed in \cite{rptree} are space partitioning data structures that automatically adapt to various notions of intrinsic dimensionality of data. We prove new results for both the \RPM\ and the \RPm\ data structures. Our result for \RPM\ gives a near-optimal bound on the number of levels required by this data structure to reduce the size of its cells by a factor $s \geq 2$. We also prove a packing lemma for this data structure. Our final result shows that low-dimensional manifolds have bounded Local Covariance Dimension. As a consequence we show that \RPm\ adapts to manifold dimension as well.
\end{abstract}

\section{Introduction}
\label{intro}
The Curse of Dimensionality \cite{curse} has inspired research in several directions in Computer Science and has led to the development of several novel techniques such as dimensionality reduction, sketching etc. Almost all these techniques try to map data to lower dimensional spaces while approximately preserving useful information. However, most of these techniques do not assume anything about the data other than that they are are imbedded in some high dimensional Euclidean space endowed with some distance/similarity function.

As it turns out, in many situations, the data is not simply scattered in the Euclidean space in a random fashion. Often, generative processes impose (non-linear) dependencies on the data that restrict the degrees of freedom available and result in the data having low intrinsic dimensionality. There exist several formalizations of this concept of intrinsic dimensionality. \cite{rptree} provides an excellent example of automated motion capture in which a large number of points on the body of an actor are sampled through markers and their coordinates transferred to an animated avatar. Now, although a large sample of points is required to ensure a faithful recovery of all the motions of the body (which causes each captured frame to lie in a very high dimensional space), these points are nevertheless constrained by the degrees of freedom offered by the human body which are very few.

Algorithms that try to exploit such non-linear structure in data have been studied extensively resulting in a large number of \emph{Manifold Learning} algorithms for example \cite{isomap,manifold-jl-naor,manifold-jl-baraniuk}. These techniques typically assume knowledge about the manifold itself or the data distribution. For example, \cite{manifold-jl-naor} and \cite{manifold-jl-baraniuk} require knowledge about the intrinsic dimensionality of the manifold. \cite{isomap} requires a sampling of points that is ``sufficiently'' dense with respect to some manifold parameters.

Recently in \cite{rptree}, Dasgupta and Freund proposed space partitioning algorithms that adapt to the intrinsic dimensionality of data and do not assume explicit knowledge of this parameter. Their data structures are akin to the $k$-d tree structure and offer guaranteed reduction in the size of the cells after a bounded number of levels. Such a size reduction is of immense use in vector quantization \cite{vec-quan-rptree} and regression \cite{reg-rptree}. \cite{rptree} presents two such tree structures, each adapting to a different notion of intrinsic dimensionality. Both variants have already found numerous applications in regression \cite{reg-rptree}, spectral clustering \cite{spec-clust-rptree}, face recognition \cite{face-recog-rptree} and image super-resolution \cite{super-res-rptree}.

\subsection{Contributions}
The \RP\ structures are new entrants in a large family of space partitioning data structures such as $k$-d trees \cite{kdtree}, BBD trees \cite{bbdtree}, BAR trees \cite{bartree} and several others (see \cite{moti-kitab} for an overview). The typical guarantees given by these data structures are of the following types :
\begin{enumerate}
	\item \textbf{Space Partitioning Guarantee} : There exists a bound $L(s), s \geq 2$ on the number of levels one has to go down before all descendants of a node of size $\Delta$ are of size $\Delta/s$ or less. The size of a cell is variously defined as the length of the longest side of the cell (for box-shaped cells), radius of the cell, etc. 
	\item \textbf{Bounded Aspect Ratio} : There exists a certain ``roundedness'' to the cells of the tree - this notion is variously defined as the ratio of the length of the longest to the shortest side of the cell (for box-shaped cells), the ratio of the radius of the smallest circumscribing ball of the cell to that of the largest ball that can be inscribed in the cell, etc.
	\item \textbf{Packing Guarantee} : Given a fixed ball $B$ of radius $R$ and a size parameter $r$, there exists a bound on the number of disjoint cells of the tree that are of size greater than $r$ and intersect $B$. Such bounds are usually arrived at by first proving a bound on the aspect ratio for cells of the tree.
\end{enumerate}
These guarantees play a crucial role in algorithms for fast approximate nearest neighbor searches \cite{bbdtree} and clustering \cite{clust-bbdtree}. We present new results for the \RPM\ structure for all these types of guarantees. We first present a bound on the number of levels required for size reduction by any given factor in an \RPM. Our result improves the bound obtainable from results presented in \cite{rptree}. Next, we prove an ``effective'' aspect ratio bound for \RPM. Given the randomized nature of the data structure it is difficult to directly bound the aspect ratios of all the cells. Instead we prove a weaker result that can nevertheless be exploited to give a packing lemma of the kind mentioned above. More specifically, given a ball $B$, we prove an aspect ratio bound for the smallest cell in the \RPM\ that completely contains $B$.

Our final result concerns the \RPm\ data structure. The authors in \cite{rptree} prove that this structure adapts to the \emph{Local Covariance Dimension} of data (see Section~\ref{loc-cov-manifold} for a definition). By showing that low-dimensional manifolds have bounded local covariance dimension, we show its adaptability to the manifold dimension as well. Our result demonstrates the robustness of the notion of manifold dimension - a notion that is able to connect to a geometric notion of dimensionality such as the doubling dimension (proved in \cite{rptree}) as well as a statistical notion such as Local Covariance Dimension (this paper).

\subsection{Organization of the paper}
In Section~\ref{prob-rptree} we present a brief introduction to the \RPM\ data structure and discuss its analysis. In Section~\ref{size-redn} we present our generalized size reduction lemma for the \RPM. In Section~\ref{packing} we give an effective aspect ratio bound for the \RPM\ which we then use to arrive at our packing lemma. In Section~\ref{loc-cov-manifold} we show that the \RPm\ adapts to manifold dimension.

All results cited from other papers are presented as \emph{Facts} in this paper. We will denote by $B(x,r)$, a closed ball of radius $r$ centered at $x$. We will denote by $d$, the intrinsic dimensionality of data and by $D$, the ambient dimensionality (typically $d \ll D$).

\section{The \RPM\ structure}
\label{prob-rptree}
The \RPM\ structure adapts to the doubling dimension of data (see definition below). Since low-dimensional manifolds have low doubling dimension (see \cite{rptree} Theorem 22) hence the structure adapts to manifold dimension as well.
\begin{defn}[taken from \cite{assouad}]
\label{assouad-defn}
The doubling dimension of a set $S \subset \R^D$ is the smallest integer $d$ such that for any ball $B(x,r) \subset \R^D$, the set $B(x,r) \cap S$ can be covered by $2^d$ balls of radius $r/2$.
\end{defn}
The \RPM\ algorithm is presented data imbedded in $\R^D$ having doubling dimension $d$. The algorithm splits data lying in a cell $C$ of radius $\Delta$ by first choosing a random direction $v \in \R^D$, projecting all the data inside $C$ onto that direction, choosing a random value $\delta$ in the range $[-1,1]\cdot 6\Delta/\sqrt{D}$ and then assigning a data point $x$ to the left child if $x\cdot v < \text{median}(\{z\cdot v: z \in C\}) + \delta$ and the right child otherwise. Since it is difficult to get the exact value of the radius of a data set, the algorithm settles for a constant factor approximation to the value by choosing an arbitrary data point $x \in C$ and using the estimate $\tilde{\Delta} = \max(\{\|x-y\| : y \in C\})$.

The following result is proven in \cite{rptree} :
\begin{fct}[Theorem 3 in \cite{rptree}]
\label{main-rptree}
There is a constant $c_1$ with the following property. Suppose an \RPM\ is built using a data set $S \subset \R^D$ . Pick any cell $C$ in the \RPM; suppose that $S \cap C$ has doubling dimension $\leq d$. Then with probability at least $1/2$ (over the randomization in constructing the subtree rooted at $C$), every descendant $C'$ more than $c_1d\log d$ levels below $C$ has radius$(C') \leq$ radius$(C)/2$.
\end{fct}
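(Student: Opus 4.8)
The plan is to use the doubling dimension to replace $S\cap C$ by a finite net, to prove a one-level lemma stating that a single random-projection split, with at least constant probability, makes constant-factor progress on a suitable potential built from that net, and then to combine these over the levels; the bound $O(d\log d)$ should fall out of the fact that the net has size $2^{O(d\log d)}$. Concretely, write $\Delta=\mathrm{radius}(C)$ (the estimate $\tilde\Delta$ the algorithm uses is within a factor $2$ of $\Delta$, which I absorb into constants). Iterating Definition~\ref{assouad-defn} shows that for any scale $\rho$, $S\cap C$ is covered by $(\Delta/\rho)^{O(d)}$ balls of radius $\rho$; taking $\rho=\Delta/\mathrm{poly}(d)$ gives a net $Z=\{z_1,\dots,z_M\}$ of $S\cap C$ with $M=2^{O(d\log d)}$. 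Since any cell $C'$ with $\mathrm{radius}(C')>\Delta/2$ must contain two data points more than $\Delta/2$ apart, and hence two data points lying within $\rho$ of two net points that are more than $\Delta/4$ apart, it suffices to show that with probability $\ge 1/2$, within $L=c_1 d\log d$ levels no descendant cell has this property --- equivalently that an appropriate potential on $Z$ (the number of net points with data nearby in the cell, or a measure supported on $Z$) is driven down to its trivial value everywhere within $L$ levels.

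The one-level statement is the technical heart and, I expect, the main obstacle. In a cell of radius $r$ a split picks a random unit $v$ and threshold $\mathrm{median}+\delta$ with $\delta$ uniform in $[-1,1]\cdot 6\tilde r/\sqrt D$. The random-projection facts to deploy are standard: $\langle x-y,v\rangle$ has magnitude $\Theta(\|x-y\|/\sqrt D)$ with constant probability and exceeds $t\|x-y\|/\sqrt D$ only with probability $e^{-\Omega(t^2)}$; consequently the empirical median of the projected cell lies within $O(r/\sqrt D)$ of the projection of the cell's centre, since by Markov at most an $e^{-\Omega(t^2)}$ fraction of projected points can exceed $t\,r/\sqrt D$ in magnitude. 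The role of the jitter $\delta$ is exactly that the window $\Theta(r/\sqrt D)$ is simultaneously \emph{wide enough} to reach a target gap of width $\Omega(\Delta/\sqrt D)$ --- the projected separation of two net points $\Omega(\Delta)$ apart --- even though that gap lies within $O(\Delta/\sqrt D)$ of the median but need not contain it, and \emph{narrow enough} that the threshold lands strictly inside that gap with constant, not $1/\mathrm{poly}(d)$, probability; and since $r\in(\Delta/2,\Delta]$ while a cell is still large, this window stays $\Theta(\Delta/\sqrt D)$ at every relevant level. The genuinely delicate step is upgrading ``a well-separated pair of net points is split with constant probability'' to ``the potential drops by a constant factor with constant probability'': a single split only separates along the one direction $v$, and need not separate the $\rho$-neighbourhoods of those net points, which project to overlapping intervals once $D$ is large, so a naive union bound over pairs of net points is too weak --- I expect one has to re-invoke the doubling property inside this argument, or average over which net points get cut off, to obtain the constant-factor drop.

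Given the one-level lemma the remainder is routine. Conditioned on the history, each level shrinks the potential of every surviving large cell by a constant factor with probability at least an absolute constant $p_0>0$; a Chernoff bound then shows that the probability that some cell still has non-trivial potential after $L=c_1 d\log d$ levels is at most the number of relevant cells, $2^{O(d\log d)}$, times $e^{-\Omega(L)}$, which is below $1/2$ once $c_1$ is chosen large enough, and on the complementary event every level-$L$ descendant has radius $\le\Delta/2$. The residual factor-$2$ discrepancies between $\tilde\Delta$ and $\Delta$, and between radius and diameter, are absorbed into the constants.
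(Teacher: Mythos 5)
First, note that the paper does not prove this statement at all: it is imported verbatim as a \emph{Fact} (Theorem 3 of \cite{rptree}), so the only fair comparison is with the original argument in \cite{rptree}, whose template this paper reproduces when proving its generalization, Theorem~\ref{strong-gen} (via Lemmas~\ref{ball-split-prob}, \ref{good-split-prob}, \ref{bad-split-prob} and \ref{go-down}). Measured against that template, your proposal has a genuine gap, and it is exactly the step you yourself flag as the ``genuinely delicate step'': the one-level lemma asserting that a single split drives a cell-level potential (number of occupied net points) down by a constant factor with constant probability. You do not prove it, and the known proof never needs it. Instead, the argument is pairwise: cover $S\cap C$ by $2^{O(d\log d)}$ balls of radius $\Delta/O(\sqrt d)$ (times a small constant), fix one pair whose centers are at least about $\Delta/2$ apart, and show that a single split is \emph{good} (cleanly separates the two balls, using the projected-median control of Fact~\ref{median} and the anti-concentration/concentration bounds of Fact~\ref{norm}) with probability $\Omega(1)$, while it is \emph{bad} (splits one of the balls, so both children may retain both balls) with probability $O(\sqrt d\cdot\rho/\Delta)$, a much smaller constant by the choice of $\rho$ (this is Lemma~\ref{ball-split-prob}). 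Tracking, for that fixed pair, the probability that \emph{some} cell $k$ levels down still contains data from both balls gives a geometric decay $(1-\Omega(1))^k$ exactly as in Lemma~\ref{go-down} with $s=2$: a good split zeroes the quantity, a bad split at worst doubles it but is rare, a neutral split leaves it unchanged. A union bound over the $2^{O(d\log d)}$ well-separated pairs then forces $k=O(d\log d)$, and no bookkeeping of per-cell progress is required. So the missing idea is not a refinement of your potential argument; it is a different decomposition of the event.

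Your closing ``routine'' step is also not routine as stated, and the paper itself explains why (the discussion between Fact~\ref{main-rptree} and Fact~\ref{main-rptree-extn}): if each level only succeeds with some absolute constant probability $p_0$, a Chernoff bound per root-to-leaf path followed by a union bound over the $2^{L}=2^{c_1d\log d}$ descendant cells does not close unless the per-level failure exponent beats $\ln 2$, which a generic constant $p_0$ does not guarantee; this is precisely the obstruction that forces the paper to introduce the boosted version (Fact~\ref{main-rptree-extn}) before it can even iterate the factor-$2$ statement. The pairwise recursion avoids this entirely, because the quantity that decays geometrically is already ``the probability that \emph{any} descendant cell at level $k$ meets both balls,'' so the only union bound taken is over pairs of covering balls, not over cells. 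To repair your proposal, drop the potential and the per-cell Chernoff argument and substitute the good/bad/neutral split analysis for a fixed well-separated pair of covering balls, followed by the union bound over pairs; your net construction, the use of the jitter window $\Theta(\Delta/\sqrt D)$, and the median-concentration ingredients are otherwise consistent with that proof.
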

In Sections~\ref{prob-rptree}, \ref{size-redn} and \ref{packing}, we shall always assume that the data has doubling dimension $d$ and shall not explicitly state this fact again and again. Let us consider extensions of this result to bound the number of levels it takes for the size of all descendants to go down by a factor $s > 2$. Let us analyze the case of $s = 4$. Starting off in a cell $C$ of radius $\Delta$, we are assured of a reduction in size by a factor of $2$ after $c_1d\log d$ levels. Hence all $2^{c_1d\log d}$ nodes at this level have radius $\Delta/2$ or less. Now we expect that after $c_1d\log d$ more levels, the size should go down further by a factor of $2$ thereby giving us our desired result. However, given the large number of nodes at this level and the fact that the success probability in Fact~\ref{main-rptree} is just greater than a constant bounded away from $1$, it is not possible to argue that after $c_1d\log d$ more levels the descendants of all these $2^{c_1d\log d}$ nodes will be of radius $\Delta/4$ or less. It turns out that this can be remedied by utilizing the following extension of the basic size reduction result in \cite{rptree}. We omit the proof of this extension.
\begin{fct}[Extension of Theorem 3 in \cite{rptree}]
\label{main-rptree-extn}
For any $\delta > 0$, with probability at least $1 - \delta$, every descendant $C'$ which is more than $c_1d\log d + \log(1/\delta)$ levels below $C$ has radius$(C') \leq$ radius$(C)/2$.
\end{fct}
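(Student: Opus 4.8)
The plan is to derive Fact~\ref{main-rptree-extn} from the proof (not merely the statement) of Fact~\ref{main-rptree} by the standard device of spending $\log(1/\delta)$ extra levels to boost a constant success probability to $1-\delta$. Before doing that I would record one elementary reduction: every child cell is a subset of its parent, so $\text{radius}(\cdot)$ is non-increasing along every root-to-node path of the subtree rooted at $C$. Consequently, the event ``some descendant more than $L$ levels below $C$ has radius exceeding $\text{radius}(C)/2$'' is equivalent to ``there is a chain of $L+1$ consecutive cells below $C$, every one of radius exceeding $\text{radius}(C)/2$.'' It therefore suffices to bound the probability of such a long ``heavy'' chain by $\delta$, taking $L = c_1 d\log d + \log(1/\delta)$.

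The point to be careful about -- and the reason this is an honest lemma rather than a one-line corollary -- is that Fact~\ref{main-rptree} cannot be used as a black box here. After $c_1 d\log d$ levels the failure probability $\le 1/2$ of Fact~\ref{main-rptree} is associated with potentially $2^{\Omega(d\log d)}$ surviving heavy cells, and although each of their subtrees independently shrinks within another $c_1 d\log d$ levels with probability $\ge 1/2$, demanding that \emph{all} of them do so is hopeless for a union bound, and independence makes matters worse, not better. The fix is to re-enter the proof of Fact~\ref{main-rptree} and track its dependence on the number of levels $L$. That proof controls a surviving heavy descendant by a union bound over $2^{O(d\log d)}$ ``local'' bad events (essentially one per element of a net of $S\cap C$, together with a scale argument), each of which, at every level and using the fresh randomness of that level's split, is resolved with at least an absolute constant probability; a Chernoff-type estimate over the $L$ independent levels then shows each local bad event has probability at most $2^{-\Omega(L)}$, so the whole failure probability is at most $2^{O(d\log d)}\cdot 2^{-\Omega(L)}$, and the choice $L=c_1 d\log d$ is precisely what makes this $\le 1/2$.

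Given this refined bound, the extension is immediate: $\OO{\log(1/\delta)}$ further levels (with the constant absorbed into $c_1$) multiply the failure bound by $2^{-\Omega(\log(1/\delta))}$, pushing it below $\delta$. The \textbf{main difficulty} is therefore not conceptual but lies entirely in step two: one must open up the original analysis and verify that its failure probability genuinely decays by a constant factor per additional level. This rests on the independence of the splits across levels and on the per-level ``isolation'' probability being bounded below by an absolute constant that does not depend on which cell one is in, and then on bookkeeping the constants carefully enough to land on the clean additive $\log(1/\delta)$ in the statement -- which is why we, like the authors of \cite{rptree}, state it as a fact and omit the routine but tedious verification.
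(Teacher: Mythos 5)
The paper itself omits the proof of Fact~\ref{main-rptree-extn} (it says ``We omit the proof of this extension''), so there is no in-paper argument to compare against. Your high-level plan --- re-enter the proof of Fact~\ref{main-rptree} rather than use it as a black box, extract that its failure probability after $L$ levels has the form $2^{O(d\log d)}\cdot\lambda^L$ for an absolute constant $\lambda<1$, then spend $O(\log(1/\delta))$ extra levels to push this below $\delta$ --- is the right plan, and you correctly identify why the black-box route fails (a union bound over $2^{\Omega(d\log d)}$ surviving heavy cells is hopeless) and that the constants need bookkeeping to match the clean additive $\log(1/\delta)$.

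Where the sketch goes wrong is the stated mechanism for the $\lambda^L$ decay. You describe it as ``a Chernoff-type estimate over the $L$ independent levels'' applied per local bad event, each of which is ``resolved with at least an absolute constant probability'' at every level. That picture is off in two ways. First, the local bad events in the Dasgupta--Freund argument are well-separated \emph{pairs} of balls from a covering of $S\cap C$, not single elements of a net; the requirement is that no descendant cell retain data from both balls of such a pair. Second --- and this is the substantive gap --- a bad split does not merely fail to resolve a pair, it \emph{doubles} the number of descendant cells that still contain data from both balls. A Chernoff bound along a fixed root-to-node path says nothing useful here, since there are $2^L$ paths by depth $L$ and bad splits proliferate across them; a naive union bound over paths gives $2^L(1-p_g)^L$, which need not be small. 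The decay actually comes from a first-moment, branching-process computation: with good-split probability $p_g$ and bad-split probability $p_b$, the expected number of depth-$L$ cells containing both balls is at most $(2p_b + (1-p_b-p_g))^L = (1+p_b-p_g)^L$, and the entire argument hinges on the structural inequality $p_b < p_g$. This is exactly the logic the present paper spells out in Lemma~\ref{go-down} for its proof of Theorem~\ref{strong-gen}, and your sketch omits it. Once you have $(1+p_b-p_g)^L$ in hand, your $\log(1/\delta)$ accounting is fine, but as written ``step two'' of your proposal would not survive an attempt to fill in the details.
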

This gives us a way to boost the confidence and do the following : go down $L = c_1d\log d + 2$ levels from $C$ to get the the radius of all the $2^{c_1d\log d + 2}$ descendants down to $\Delta/2$ with confidence $1-1/4$. Afterward, go an additional $L' = c_1d\log d + L + 2$ levels from each of these descendants so that for any cell at level $L$, the probability of it having a descendant of radius $> \Delta/4$ after $L'$ levels is less than $\frac{1}{4\cdot2^L}$. Hence conclude with confidence at least $1 - \frac{1}{4} - \frac{1}{4\cdot2^L}\cdot2^L \geq \frac{1}{2}$ that all descendants of $C$ after $2L + c_1d\log d + 2$ have radius $\leq \Delta/4$. This gives a way to prove the following result :
\begin{thm}
\label{weak-gen}
There is a constant $c_2$ with the following property. For any $s \geq 2$, with probability at least $1 - 1/4$, every descendant $C'$ which is more than $c_2\cdot s \cdot d\log d$ levels below $C$ has radius$(C') \leq$ radius$(C)/s$.
\end{thm}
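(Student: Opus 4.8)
The plan is to iterate the high-confidence size-reduction result, Fact~\ref{main-rptree-extn}, in $k := \lceil\log_2 s\rceil$ stages, each stage halving the radius of every surviving cell, while splitting the total allowed failure probability $1/4$ geometrically across the stages so that the union bounds telescope. Note $s \le 2^k < 2s$, so it suffices to drive every sufficiently deep cell down to radius $\Delta/2^k$, where $\Delta := \text{radius}(C)$.

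First I would set up the induction. Fix budgets $\beta_j := 2^{-j}/4$ (so $\sum_{i\le j}\beta_i < 1/4$ for every $j$) and define levels $0 = L_0 < L_1 < \cdots < L_k$ as below. The inductive claim is that, with probability at least $1 - \sum_{i\le j}\beta_i$, every cell at level $\ge L_j$ below $C$ has radius $\le \Delta/2^j$; the case $j = 0$ is vacuous. For the step, condition on the at most $2^{L_{j-1}}$ cells at level $L_{j-1}$ and on the (inductively good) event that each has radius $\le \Delta/2^{j-1}$. Since the subtree grown below each such cell uses fresh, independent randomness, I apply Fact~\ref{main-rptree-extn} to each of them with parameter $\delta_j := \beta_j\,2^{-L_{j-1}}$, so that a union bound over the $\le 2^{L_{j-1}}$ cells leaves a conditional failure probability of at most $2^{L_{j-1}}\delta_j = \beta_j$. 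On the surviving event, every descendant of a level-$L_{j-1}$ cell that is more than $c_1 d\log d + \log(1/\delta_j)$ levels below it has radius at most $\Delta/2^j$; taking
\[
 L_j \;:=\; L_{j-1} + c_1 d\log d + \lceil\log(1/\delta_j)\rceil + 1 \;\le\; 2L_{j-1} + \big(c_1 d\log d + j + 4\big)
\]
makes the claim hold at level $j$. Chaining the conditional failure probabilities over $j = 1,\dots,k$ yields overall confidence at least $1 - 1/4$.

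It remains to bound $L_k$. Solving the recurrence $L_j \le 2L_{j-1} + (c_1 d\log d + j + 4)$ with $L_0 = 0$ gives $L_k \le (c_1 d\log d)(2^k - 1) + 2^k\sum_{i\ge 1}(i+4)2^{-i} \le 2^k\big(c_1 d\log d + 6\big)$. Since $2^k < 2s$ and $d\log d$ exceeds a fixed positive constant (the degenerate case of near-zero doubling dimension aside), the additive $6$ is absorbed into the multiplicative constant, yielding $L_k \le c_2\, s\, d\log d$ for a suitable absolute $c_2$. Hence any descendant $C'$ more than $c_2 s d\log d \ge L_k$ levels below $C$ sits at level $\ge L_k$ and, on the good event, satisfies $\text{radius}(C') \le \Delta/2^k \le \Delta/s$.

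I expect the main obstacle to be the union-bound blow-up across stages: at stage $j$ there are up to $2^{L_{j-1}}$ cells whose subtrees must all behave, forcing $\delta_j$ to be exponentially small in $L_{j-1}$, which by Fact~\ref{main-rptree-extn} costs roughly $L_{j-1}$ additional levels and hence the near-doubling recurrence $L_j \approx 2L_{j-1}$. The point that rescues the bound is that this recurrence grows only linearly in $2^k$ (hence in $s$), not exponentially in $k$, so the final level count is still $O(s\,d\log d)$; getting the constant right is precisely what forces the geometric (rather than uniform) allocation of the $1/4$ failure budget over the $k$ stages.
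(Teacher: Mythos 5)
Your proposal is correct and follows essentially the same route as the paper: iterate Fact~\ref{main-rptree-extn} stage by stage, choosing the confidence parameter exponentially small in the current depth so a union bound over all cells at that depth survives, which yields the recurrence $L(s)\leq 2L(s/2)+c_1 d\log d+O(1)$ and hence $L(s)=\OO{s\,d\log d}$. Your geometric allocation of the failure budget across the $\lceil\log_2 s\rceil$ stages is in fact a slightly more careful version of the paper's bookkeeping (the paper's illustrative $s=4$ calculation only reaches confidence $1/2$), so no changes are needed.
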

\begin{proof}
Without loss of generality assume that $s$ is a power of $2$. We will prove the result by induction. Fact~\ref{main-rptree-extn} proves the base case for $s = 2$. For the induction step, let $L(s)$ denote the number of levels it takes to reduce the size by a factor of $s$ with high confidence. Then we have
\[
L(s) \leq L(s/2) + c_1d\log d + L(s/2) + 2 = 2L(s/2) + c_1d\log d + 2
\]
Solving the recurrence gives $L(s) = \OO{sd \log d}$
\end{proof}

Notice that the dependence on the factor $s$ is linear in the above result whereas one expects it to be logarithmic. Indeed, typical space partitioning algorithms such as $k$-d trees do give such guarantees. The first result we prove in the next section is a bound on the number of levels that is poly-logarithmic in the size reduction factor $s$.

\section{A generalized size reduction lemma for \RPM}
\label{size-redn}
In this section we prove the following theorem :
\begin{thm}[\textbf{Main}]
\label{strong-gen}
There is a constant $c_3$ with the following property. Suppose an \RPM\ is built using data set $S \subset \R^D$ . Pick any cell $C$ in the \RPM; suppose that $S \cap C$ has doubling dimension $\leq d$. Then for any $s \geq 2$, with probability at least $1 - 1/4$ (over the randomization in constructing the subtree rooted at $C$), for every descendant $C'$ which is more than $c_3 \cdot \log s \cdot d\log sd$ levels below $C$, we have radius$(C') \leq$ radius$(C)/s$.
\end{thm}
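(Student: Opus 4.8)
The plan is to establish a ``factor-$s$'' analogue of Fact~\ref{main-rptree} \emph{directly}, instead of chaining the factor-$2$ statement $\log s$ times as in Theorem~\ref{weak-gen}. Chaining is exactly what produces the $\OO{s\,d\log d}$ bound: to push the reduction one more scale, Fact~\ref{main-rptree-extn} has to be applied to \emph{every} cell present at the current scale, and a priori there are up to $2^{(\text{current depth})}$ of them, so one must pay a failure probability $2^{-\Theta(\text{depth})}$, which feeds $\Theta(\text{depth})$ back into the level count and forces the doubling recurrence. The entire point of the proof is to replace that exponential-in-depth cell count by a quantity depending only on $d$ and $\log s$.

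I would first fix the scales. Put $k=\lceil\log_2 s\rceil$ and, applying Definition~\ref{assouad-defn} repeatedly, build a hierarchy of covers $\mathcal N_0,\mathcal N_1,\dots,\mathcal N_k$ of $S\cap C$, where $\mathcal N_t$ consists of $2^{dt}$ balls of radius $\Delta/2^{t+2}$ (the extra constant is there only to defeat the cover-versus-packing nuisance mentioned below, and I suppress it henceforth), and each ball of $\mathcal N_t$ is refined into the at most $2^d$ balls of $\mathcal N_{t+1}$ covering its points. The trivial observation is that if, at some depth, every cell has all of its data inside a single ball of $\mathcal N_k$, then every such cell has radius $\le\Delta/s$; so it suffices to drive the tree down to a depth at which it is ``pure at scale $k$''.

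The core is an induction over $t=1,\dots,k$ with the hypothesis: after $L_t$ levels, with probability $\ge 1-t/(4k)$, every cell is pure at scale $t$. For the inductive step I would condition on the tree down to depth $L_{t-1}$ and use the Markov property of the construction — the randomness used inside a cell is independent of everything above it — so the subtree below each depth-$L_{t-1}$ cell can be analysed on its own. Now comes the key move, and the step I expect to be the real obstacle: rather than unioning over the (exponentially many) depth-$L_{t-1}$ cells, I would union over the \emph{net balls}. The enabling sub-lemma, which I would extract from the proof of Fact~\ref{main-rptree} in \cite{rptree}, is that the size-reduction mechanism there is really a statement about \emph{separating and confining point sets}, not about cells per se: for any ball $B$ of radius $\rho$ with $B\cap S$ of doubling dimension $\le d$, and any cell $C$ of the \RPM, after $c_1 d\log d+\log(1/\delta)$ further levels every descendant cell confines $B\cap S$ to diameter $\le\rho$, and moreover for any two such balls whose centres are at mutual distance a constant fraction of $\rho$, no descendant cell meets both — all except with probability $\delta$. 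Granting this, the inductive step is: apply the sub-lemma at scale $\Delta/2^{t}$ to each of the $2^{dt}$ balls of $\mathcal N_{t}$ and to each of the at most $2^{dt}\cdot\binom{2^d}{2}$ pairs of $\mathcal N_{t}$-balls refining a common $\mathcal N_{t-1}$-ball; there are $2^{O(dt)}$ such tasks in all. Since a cell pure at scale $t-1$ has its data inside one $\mathcal N_{t-1}$-ball, hence inside the $\le 2^d$ refining $\mathcal N_t$-balls, confining each of those and forbidding any descendant to meet two ``far'' ones forces the descendant to be pure at scale $t$. Taking the per-task failure to be $\delta_0/2^{O(dt)}$ with $\delta_0=1/(4k)$, the union bound costs $\log(1/\delta)=\OO{dt+\log k}$, so the phase uses $\ell_t=c_1 d\log d+\OO{dt+\log\log s}=\OO{d\log(sd)}$ levels. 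Summing over the $k=\OO{\log s}$ phases gives $L_k=\OO{\log s\cdot d\log(sd)}$ with total failure probability $\le k\cdot\frac1{4k}=\frac14$, which is the claim; the doubling-dimension hypothesis is inherited by every sub-ball and cell since they lie inside $S\cap C$.

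The crux, then, is verifying that sub-lemma — i.e., that the projection-and-split analysis of \cite{rptree} behind Fact~\ref{main-rptree} genuinely confines the data of a fixed ball (and separates far pairs of balls) after $\OO{d\log d+\log(1/\delta)}$ levels, independently of how the tree cells are shaped, so that the bad events can be charged to the $2^{O(dt)}$ net-level tasks rather than to the cells. This is where one has to look inside \cite{rptree}: morally its $d\log d$ is of coupon-collector type — a random split makes progress on a given ``direction'' with probability $\Omega(1/d)$, and there are $\Theta(d)$ directions to cut — and one must check that this effect is driven by the ball's own geometry rather than by the ambient cell. A secondary nuisance, already flagged, is that membership in a common net ball does not make two points close unless that ball is genuinely small; this is handled by taking the nets a constant factor finer than the nominal scale, so that any pair a cell fails to separate lands in net balls whose centres are a constant fraction of the current radius apart — exactly the regime the separation estimate needs. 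If the sub-lemma cannot be pulled cleanly out of \cite{rptree}, the fallback is to re-run the proof of Fact~\ref{main-rptree} from scratch with target radius $\Delta/s$ in place of $\Delta/2$, carrying the factor $s$ through its potential/coupon-collector bookkeeping; this is more laborious but yields the same count.
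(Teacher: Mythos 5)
Your overall skeleton---an induction over $\OO{\log s}$ scales, paying $\OO{d\log (sd)}$ levels per scale and charging failure events to net balls and pairs of net balls rather than to the exponentially many cells---is in essence the same mechanism the paper uses, only packaged differently: the paper keeps a single cover at the finest scale $\frac{\Delta}{960s\sqrt{d}}$ and runs the recursion $L(s)\le L(s/2)+c_4\,d\log(sd)$ on the reduction factor instead of maintaining a hierarchy of nets. The problem is that your argument hinges entirely on the unproven ``sub-lemma,'' and in the cell-independent form you state it, it is false. The probability that a single split separates two balls whose centers are at distance $\rho$ is proportional to $\rho$ divided by the radius of the ambient cell (this is exactly why Lemma~\ref{good-split-prob} yields $\frac{1}{56s}$ rather than a constant), so no bound of the form ``$c_1d\log d+\log(1/\delta)$ further levels, independently of how the tree cells are shaped'' can hold: if the pair sits in a cell of radius $\Delta\gg\rho$, separation requires $\Omega(\Delta/\rho)$ levels. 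Your invocation of the sub-lemma is legitimate only after conditioning on the scale-$(t-1)$ induction hypothesis, which caps the ambient cell radius at $\OO{\Delta/2^{t}}$---and at that point what remains to be proved is precisely the quantitative content the paper supplies: the per-split probabilities of Lemmas~\ref{ball-split-prob}--\ref{bad-split-prob}, the level recursion $p^0_k\le\left(1-\frac{1}{68s}\right)^l p^l_{k-l}$ of Lemma~\ref{go-down}, which controls the proliferation of offending cells caused by bad splits (a bad split doubles the count and must be beaten by the good-split rate), and the conditioning step that turns the factor-$s/2$ hypothesis into a constant good-split probability ($\ge\frac{1}{150}$) and a small bad-split probability ($\le\frac{1}{512}$) below depth $L(s/2)$. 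Without an argument for this per-pair geometric decay across levels, your inductive step is not established; identifying it as ``the real obstacle'' and deferring it to an extraction from \cite{rptree} leaves the proof incomplete.

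Two smaller points. First, your fallback---re-running the proof of Fact~\ref{main-rptree} with target radius $\Delta/s$---does not rescue the claim: done directly it gives a good-split probability of order $1/s$ per level and hence $\Omega(s\,d\log(sd))$ levels, i.e.\ the linear-in-$s$ bound of Theorem~\ref{weak-gen}, not the polylogarithmic one; the $s$-dependence is beaten only by the conditioning-on-smaller-cells device above. Second, ``purity'' with respect to a cover is unattainable as stated, since cover balls overlap and near pairs can never be separated; what the argument actually maintains, and all it needs, is that no cell contains data from two net balls whose centers are at least a constant fraction of the current scale apart---you flag this yourself, and the paper's formulation (separate every pair at distance $\ge\frac{\Delta}{s}-\frac{\Delta}{960s\sqrt{d}}$, which forces radius $\le\Delta/s$) is the clean way to state the invariant.
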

Compared to this, data structures such as \cite{bbdtree} give deterministic guarantees for such a reduction in $D\log s$ levels which can be shown to be optimal (see \cite{rptree} for an example). Thus our result is optimal but for a logarithmic factor. Moving on with the proof, let us consider a cell $C$ of radius $\Delta$ in the \RPM\ that contains a dataset $S$ having doubling dimension $\leq d$. Then for any $\epsilon > 0$, a repeated application of Definition~\ref{assouad-defn} shows that the $S$ can be covered using at most $2^{d\log (1/\epsilon)}$ balls of radius $\epsilon\Delta$. We will cover $S \cap C$ using balls of radius $\frac{\Delta}{960s\sqrt{d}}$ so that $\OO{(sd)^d}$ balls would suffice. Now consider all pairs of these balls, the distance between whose centers is $\geq \frac{\Delta}{s} - \frac{\Delta}{960s\sqrt{d}}$. If random splits separate data from all such pairs of balls i.e. for no pair does any cell contain data from both balls of the pair, then each resulting cell would only contain data from pairs whose centers are closer than $\frac{\Delta}{s} - \frac{\Delta}{960s\sqrt{d}}$. Thus the radius of each such cell would be at most $\Delta/s$.
\begin{figure}
	\begin{center}
		\scalebox{0.7}{
			\begin{pspicture}(0,-4.025)(9.620937,4.045)
				\pscircle[linewidth=0.02,dimen=outer](4.58,0.015){3.3}
				\pscircle[linewidth=0.02,dimen=outer](2.74,0.655){0.62}
				\pscircle[linewidth=0.02,dimen=outer](6.49,-0.835){0.65}
				\psline[linewidth=0.02cm,linestyle=dashed](1.56,-3.065)(7.72,2.775)
				\psline[linewidth=0.02cm,linestyle=dashed](0.0,0.955)(8.94,-1.625)
				\psline[linewidth=0.02cm,linestyle=dashed](2.22,3.515)(4.56,-4.005)
				\psline[linewidth=0.02cm,arrows=<->,arrowlength=3,arrowsize=3pt,arrowinset=0.2](4.62,0.015)(5.1,3.235)
				\usefont{T1}{ptm}{m}{n}
				\rput(8.365156,3.065){good split}
				\usefont{T1}{ptm}{m}{n}
				\rput(9.620937,-1.535){bad split}
				\usefont{T1}{ptm}{m}{n}
				\rput(2.7751563,3.865){neutral split}
				\usefont{T1}{ptm}{m}{n}
				\rput(7.121406,-0.215){$B_2$}
				\usefont{T1}{ptm}{m}{n}
				\rput(3.4614062,1.185){$B_1$}
				\usefont{T1}{ptm}{m}{n}
				\rput(5.0914062,1.645){$\Delta$}
			\end{pspicture} 
		}
		\caption{Balls $B_1$ and $B_2$ are of radius $\Delta/s\sqrt{d}$ and their centers are $\Delta/s - \Delta/s\sqrt{d}$ apart.}%
		\label{splits}
	\end{center}
\end{figure}
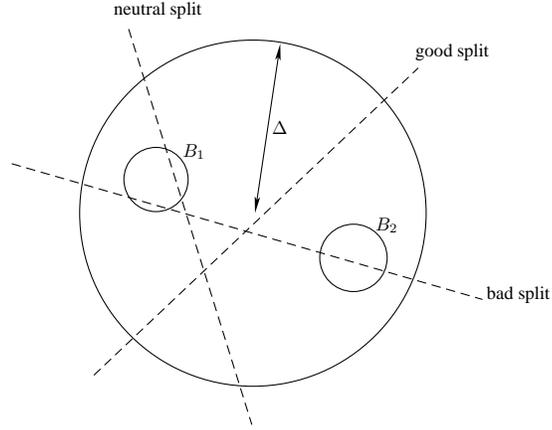

We fix such a pair of balls calling them $B_1$ and $B_2$. A split in the \RPM\ is said to be \emph{good} with respect to this pair if it sends points inside $B_1$ to one child of the cell in the \RPM\ and points inside $B_2$ to the other, \emph{bad} if it sends points from both balls to both children and \emph{neutral} otherwise (See Figure~\ref{splits}). We have the following properties of a random split :
\begin{lem}
\label{ball-split-prob}
Let $B = B(x,\delta)$ be a ball contained inside an \RPM\ cell of radius $\Delta$ that contains a dataset $S$ of doubling dimension $d$. Lets us say that a random split splits this ball if the split separates the data set $S$ into two parts. Then a random split of the cell splits $B$ with probability atmost $\frac{3\delta\sqrt{d}}{\Delta}$.
\end{lem}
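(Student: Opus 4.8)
Throughout, write $\beta$ for the random threshold perturbation (the excerpt's algorithm calls it $\delta$, which we now reserve for the radius of $B$); thus $\beta$ is uniform on an interval of length at least $12\Delta/\sqrt D$, and the split sends $x$ to the left child iff $x\cdot v < m+\beta$, where $m=\mathrm{median}(\{z\cdot v:z\in C\})$ is fixed once the random direction $v\in\R^D$ is fixed. The plan has two parts: first reduce the claim to a one-dimensional estimate on the width of the projection of $S\cap B$ onto the random direction $v$, and then bound that width using the doubling dimension through a multiscale (chaining) argument.

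For the reduction, condition on $v$ (uniform on the unit sphere) and put $a_v=\inf\{p\cdot v:p\in S\cap B\}$, $b_v=\sup\{p\cdot v:p\in S\cap B\}$. The split separates $S\cap B$ into two nonempty parts only if the threshold $m+\beta$ lies in $(a_v,b_v]$; since $\beta$ is uniform on an interval of length at least $12\Delta/\sqrt D$, the conditional probability of splitting $B$ is at most $(b_v-a_v)\sqrt D/(12\Delta)$. Taking expectations over $v$, the lemma follows once we show $\Eo_v[b_v-a_v]\le 36\,\delta\sqrt d/\sqrt D$, where $b_v-a_v=\sup_{p,q\in S\cap B}(p-q)\cdot v$ is the expected width of a random one-dimensional projection of $S\cap B$.

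To bound this width I would run a chaining argument driven by the doubling dimension. Applying Definition~\ref{assouad-defn} repeatedly to the ball $B=B(x,\delta)$ yields, for every $k\ge 0$, a cover of $S\cap B$ by at most $2^{dk}$ balls of radius $\delta 2^{-k}$ (read off directly from $B$, so nothing is lost by passing to the subset $S\cap B$). Assign to each $p\in S\cap B$ the chain $\pi_0(p),\pi_1(p),\dots$ of centres of the cover balls containing it, noting $\pi_0\equiv x$ and that each link $\pi_{k-1}(p)\!\to\!\pi_k(p)$ has length at most $3\delta 2^{-k}$. For a fixed unit vector $u$ one has $\Pro[|u\cdot v|>t]\le 2e^{-(D-2)t^2/2}$, so a union bound over the at most $2^{dk}$ distinct links at scale $k$ shows that, simultaneously for all $k\ge 1$ and with probability $1-\gamma$, every scale-$k$ link $e$ obeys $|e\cdot v|=\OO{\delta 2^{-k}\sqrt{(dk+\ln(1/\gamma))/D}}$. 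Bounding the supremum and infimum defining $b_v$ and $a_v$ by the telescoped link contributions and summing over $k$ (both $\sum_{k\ge1}2^{-k}\sqrt k$ and $\sum_{k\ge1}2^{-k}$ converge) gives $b_v-a_v=\OO{\delta(\sqrt d+\sqrt{\ln(1/\gamma)})/\sqrt D}$ on that event; integrating this tail over $\gamma$ yields $\Eo_v[b_v-a_v]=\OO{\delta\sqrt d/\sqrt D}$, hence $\Pro[\text{split }B]=\OO{\delta\sqrt d/\Delta}$.

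The remaining work is purely quantitative: pushing the hidden constant down to the stated value $3$. It is worth stressing that a plain union bound over a \emph{single} doubling cover of $S\cap B$ does not work --- at scale $\delta 2^{-k}$ the factor $2^{dk}$ overwhelms the gain --- which is exactly why the summation over scales is essential; but routing this through Dudley's universal constant is wasteful, so I would instead track constants explicitly through the telescoping sum, using $d\ge 1$, the small explicit values of $\sum_{k\ge1}2^{-k}\sqrt k$ and $\sum_{k\ge1}2^{-k}$, the $\sqrt\pi/2$ produced by integrating the Gaussian tail in $\gamma$, and --- should additional slack be needed --- the fact that the algorithm's radius estimate satisfies $\tilde\Delta\ge\Delta$, so that $\beta$ in fact ranges over an interval a constant factor longer than $12\Delta/\sqrt D$.
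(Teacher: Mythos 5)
Your reduction step is exactly the paper's: condition on the random direction, use the independence of the split point (uniform on an interval of length at least $12\Delta/\sqrt D$) to bound the conditional split probability by the projected width times $\sqrt D/(12\Delta)$, and then take expectation over the direction by integrating a tail bound on the projected radius of $B$. Where you genuinely diverge is in how that tail bound is obtained. The paper does no chaining of its own: it imports Lemma 6 of \cite{rptree} as a black box (Fact~\ref{ball-proj}), which already says that a radius-$\delta$ ball of doubling dimension $d$ projects to an interval of radius at most $\frac{4\delta}{\sqrt D}\sqrt{2(d+\ln(2/\eta))}$ with probability $1-\eta$, and then the entire proof is the calculus of integrating that tail sharply enough to land on the constant $3$. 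You instead re-derive the width bound from scratch via a multiscale cover/chaining argument over the doubling structure; that is a legitimate and more self-contained route, your accounting (link lengths $3\delta 2^{-k}$, $2^{dk}$ links per scale, per-scale union bound, convergent sums, integration of the tail in $\gamma$) is sound, and it correctly delivers $\Eo_v[b_v-a_v]=\OO{\delta\sqrt d/\sqrt D}$ and hence a split probability of $\OO{\delta\sqrt d/\Delta}$. You are also right that a single-scale cover does not suffice.

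The gap is the explicit constant. The lemma asserts $\frac{3\delta\sqrt d}{\Delta}$, and that number is load-bearing: with balls of radius $\frac{\Delta}{960s\sqrt d}$ it yields the $\frac{1}{320s}$ of Lemma~\ref{bad-split-prob}, which in turn produces the $1-\frac{1}{68s}$ decay in Lemma~\ref{go-down}. You defer this to ``remaining work,'' and one of your proposed sources of slack is not actually available: $\tilde\Delta\ge\Delta$ guarantees only that the split interval has length at least $12\Delta/\sqrt D$ --- precisely what you (and the paper) already used --- and $\tilde\Delta$ can equal $\Delta$ (e.g.\ when the reference data point sits at the circumcenter of the cell), so there is no ``constant factor longer'' interval to fall back on. As written, then, your argument proves the lemma with an unspecified absolute constant; to recover $3$ you must either carry the chaining constants through explicitly and check they beat the paper's bound $\frac{2\delta}{3\Delta}\left[\sqrt{2(d+\ln 2)}+\sqrt{\pi/2}\right]\le\frac{3\delta\sqrt d}{\Delta}$ (plausible, since there is slack at $d=1$, but not automatic from a generic Dudley-type bound), or do what the paper does and quote Fact~\ref{ball-proj} with its sharp constants, reducing your proof to the tail-integration step.
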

\begin{proof}
The \RPM\ splits proceed by randomly projecting the data in a cell onto the real line and then choosing a split point in an interval of length $12\Delta/\sqrt{D}$. It is important to note that the random direction and the split point are chosen independently. Hence, suppose data inside the ball $B$ gets projected onto an interval $\tilde{B}$ of radius $r$, then the probability of it getting split is atmost $r\sqrt{D}/6\Delta$ since the split point is chosen randomly in an interval of length $12\Delta/\sqrt{D}$ independently of the projection. Let $R_B$ be the random variable that gives the radius of the interval $\tilde{B}$. Hence the probability of $B$ getting split is the following
\begin{eqnarray*}
\frac{\sqrt{D}}{6\Delta}\int\limits_0^\infty{r\Pr{R_B=r}dr} &=& \frac{\sqrt{D}}{6\Delta}\int\limits_0^\infty{\int\limits_0^r\Pr{R_B=r}dtdr} = \frac{\sqrt{D}}{6\Delta}\int\limits_0^\infty{\int\limits_t^\infty\Pr{R_B=r}drdt}\\
 &=& \frac{\sqrt{D}}{6\Delta}\int\limits_0^\infty{Pr[R_B \geq t]dt}
\end{eqnarray*}
We have the following result from \cite{rptree}
\begin{fct}[Lemma 6 of \cite{rptree}]
\label{ball-proj}
$\Pr{R_B \geq \frac{4\delta}{\sqrt{D}}\sqrt{2\left(d + \ln\frac{2}{\eta}\right)}} \leq \eta$
\end{fct}
Fix the value $l = \frac{4\delta}{\sqrt{D}}\sqrt{2\left(d + \ln 2\right)}$. Using the fact that for any $t$, $Pr[R_B \geq t] \leq 1$ and making the change of variables $t = \frac{4\delta}{\sqrt{D}}\sqrt{2\left(d + \ln\frac{2}{\eta}\right)}$ we get
\[
\int\limits_0^\infty{Pr[R_B \geq t]dt} = \int\limits_0^l{Pr[R_B \geq t]dt} + \int\limits_l^\infty{Pr[R_B \geq t]dt}  \leq \int\limits_0^l{1 dt} + \int\limits_1^0{\eta dt(\eta)}
\]
Simplifying the above expression, we get the split probability to be atmost
\[
\frac{2\delta}{3\Delta}\left[ \sqrt{2\left(d + \ln 2\right)} + \int\limits_0^1{\frac{d\eta}{\sqrt{2\left(d + \ln\frac{2}{\eta}\right)}}}\right] = \frac{2\delta}{3\Delta}\left[ \sqrt{2\left(d + \ln 2\right)} + 2\sqrt{2}e^d \int\limits_{\sqrt{\ln 2 + d}}^\infty{e^{-x^2}dx}\right]
\]
Now $\int\limits_a^\infty{e^{-x^2}dx} = \frac{1}{2}\left[\int\limits_{-\infty}^\infty{e^{-x^2}dx} - \int\limits_{-a}^a{e^{-x^2}dx}\right] \leq \frac{\sqrt{\pi}}{2}\left[1 - \sqrt{1-e^{-a^2}} \right] \leq \frac{\sqrt{\pi}}{2}e^{-a^2}$ since $1- \sqrt{1-x} < x$ for $0 < x < 1$. Using $d \geq 1$ , we get the probability of the ball $B$ getting split to be atmost $\frac{2\delta}{3\Delta}\left[ \sqrt{2\left(d + \ln 2\right)} + \sqrt{\frac{\pi}{2}} \right] \leq \frac{3\delta\sqrt{d}}{\Delta}$.
\end{proof}

\begin{lem}
\label{good-split-prob}
Let $B_1$ and $B_2$ be a pair of balls as described above contained in the cell $C$ that contains data of doubling dimension $d$. Then a random split of the cell is a good split with respect to this pair with probability at least $\frac{1}{56s}$.
\end{lem}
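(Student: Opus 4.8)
The plan is to reduce a good split to two requirements: that $B_1$ and $B_2$ project onto the random direction $v$ to disjoint, well-separated intervals, and that the random offset then places the split threshold in the gap between them. Fix $v$, let $x_i$ denote the centre of $B_i$, write $\mu=\text{median}\{z\cdot v:z\in S\cap C\}$, and let $\tilde B_i$ be the projection of $S\cap B_i$, an interval of some random radius $R_i$ about $x_i\cdot v$; assume without loss of generality $x_1\cdot v<x_2\cdot v$. Since the direction $v$ and the offset $\delta\in[-6\Delta/\sqrt D,6\Delta/\sqrt D]$ are drawn independently, the split is good for the pair precisely when the threshold $\tau=\mu+\delta$ lands in the open gap $G_v:=(x_1\cdot v+R_1,\ x_2\cdot v-R_2)$; hence $\Pr{\text{good}}=\E{\Pr{\tau\in G_v\mid v}}$, and conditionally on $v$ this equals $\lvert G_v\cap[\mu-6\Delta/\sqrt D,\mu+6\Delta/\sqrt D]\rvert\,/\,(12\Delta/\sqrt D)$.

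Next I would work on a ``good direction'' event $G=G_1\cap G_2\cap G_3$, with $G_1=\{\lvert(x_1-x_2)\cdot v\rvert\ge\tfrac{1}{2\sqrt D}\lVert x_1-x_2\rVert\}$, $G_2=\{R_1,R_2\le\tfrac{\Delta}{78\,s\sqrt D}\}$, and $G_3=\{\lvert x_i\cdot v-\mu\rvert\le\tfrac{6\Delta}{\sqrt D}\text{ for }i=1,2\}$. On $G$ the gap $G_v$ is non-empty, and $\lvert G_v\rvert\ge\tfrac{1}{2\sqrt D}\lVert x_1-x_2\rVert-\tfrac{2\Delta}{78\,s\sqrt D}\ge \tfrac{c'\Delta}{s\sqrt D}$ for an absolute constant $c'>0$, using $\lVert x_1-x_2\rVert\ge\tfrac{\Delta}{s}-\tfrac{\Delta}{960\,s\sqrt d}\ge\tfrac{959}{960}\cdot\tfrac{\Delta}{s}$; moreover, since every point of $G_v\subseteq[x_1\cdot v,x_2\cdot v]$ is within $6\Delta/\sqrt D$ of $\mu$ by $G_3$, the whole gap lies inside the offset window, so $\Pr{\tau\in G_v\mid v}\ge\tfrac{c'\Delta/(s\sqrt D)}{12\Delta/\sqrt D}=\tfrac{c'}{12s}$. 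Thus $\Pr{\text{good}}\ge\Pr{G}\cdot\tfrac{c'}{12s}$, and everything reduces to lower-bounding $\Pr{G}$ by an absolute constant.

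Events $G_1$ and $G_2$ are routine. For $G_1$, $(x_1-x_2)\cdot v/\lVert x_1-x_2\rVert$ has the law of a single coordinate of a uniformly random unit vector, whose density near $0$ is $\Theta(\sqrt D)$, so $\Pr{G_1}$ exceeds an absolute constant (near $0.6$). For $G_2$, I would apply Fact~\ref{ball-proj} to $B_1$ and to $B_2$ with a constant failure parameter $\eta$: the point of the balls having radius $\tfrac{\Delta}{960\,s\sqrt d}$ is exactly that the factor $\sqrt{d+\ln(2/\eta)}$ there gets divided by $\sqrt d$, leaving a bound on each $R_i$ of the form $O(\Delta/(s\sqrt D))$ with a small absolute constant and no leftover $d$-dependence; a small constant $\eta$ makes $\Pr{G_2}$ as close to $1$ as we please.

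The delicate step is $G_3$: guaranteeing the gap is reachable by an offset of size only $6\Delta/\sqrt D$. Bounding $\lvert x_i\cdot v-\mu\rvert\le\lvert x_i\cdot v-o\cdot v\rvert+\lvert o\cdot v-\mu\rvert$ with $o$ the mean of $S\cap C$ (so $\lVert z-o\rVert\le\Delta$ for every $z\in S\cap C$), the first term is a one-dimensional projection of a vector of norm $\le\Delta$ and concentrates at scale $\Delta/\sqrt D$. The second term, $\lvert o\cdot v-\mu\rvert$, is the obstacle: it cannot be handled through the projected diameter of the cell, since Fact~\ref{ball-proj} only gives that diameter as $O(\sqrt d\,\Delta/\sqrt D)$, which swamps the offset window for large $d$. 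Instead I would pin the median down by a first-moment argument: the expected number of $z\in S\cap C$ with $(z-o)\cdot v\ge\alpha\Delta/\sqrt D$ is at most $\lvert S\cap C\rvert\cdot\max_z\Pr{(z-o)\cdot v\ge\alpha\Delta/\sqrt D}\le\lvert S\cap C\rvert\cdot e^{-\alpha^2/2}$ (using $\lVert z-o\rVert\le\Delta$), so by Markov's inequality the event that at least half the points lie that far out --- a fortiori the event $\mu<o\cdot v-\alpha\Delta/\sqrt D$ --- has probability $\le 2e^{-\alpha^2/2}$, and symmetrically on the other side; crucially this estimate is $d$-free. Choosing $\alpha$ and the analogous constant $\alpha'$ for the $\lvert x_i\cdot v-o\cdot v\rvert$ term with $\alpha+\alpha'\le 6$ keeps the three failure probabilities small, so a union bound gives $\Pr{G}=\Omega(1)$, and feeding this into $\Pr{\text{good}}\ge\Pr{G}\cdot c'/(12s)$ with the constants tracked (the anti-concentration constant in $G_1$, the choice of $\eta$, and the budget $\alpha+\alpha'\le 6$) delivers $\Pr{\text{good split}}\ge\tfrac1{56\,s}$. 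The hard part, then, is precisely making this first-moment bound on the median sharp enough to fit $G_3$ into the offset window with constant probability.
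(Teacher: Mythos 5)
Your overall route is the same as the paper's: condition on a ``good direction'' (projected centres well separated via Fact~\ref{norm}, projected ball radii small via Fact~\ref{ball-proj}, everything inside the split-point window around the median), then use independence of the split offset to collect a factor of (gap length)$/(12\Delta/\sqrt D)$, exactly as in the paper; your first-moment/Markov bound on the median is essentially the proof of Fact~\ref{median} (Corollary 8 of \cite{rptree}) rather than a new idea. The genuine problem is in the step you yourself flag as delicate, $G_3$. You anchor at the mean $o$ of $S\cap C$ and assert $\|z-o\|\le\Delta$ for every $z\in S\cap C$. That is false in general: the cell's data is only guaranteed to lie in some ball $B(x,\Delta)$, and since the mean need not be central (think of most points near one end of a diameter and one point at the other end), individual points can be at distance close to $2\Delta$ from $o$; likewise $\|x_i-o\|$ can approach $2\Delta$. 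With the correct bound $2\Delta$ your Gaussian tails degrade from $e^{-\alpha^2/2}$ to $e^{-\alpha^2/8}$, and then the budget $\alpha+\alpha'\le 6$ cannot make both the median term and the $|x_i\cdot v-o\cdot v|$ terms fail with small constant probability (each of $\alpha,\alpha'$ would need to be roughly $5$ or more), so the accounting behind $\frac{1}{56s}$ does not close as written. The fix is simply to anchor at the ball centre $x$ instead of the mean: $\|z-x\|\le\Delta$ and $\|x_i-x\|\le\Delta$ hold by definition of the radius, your Markov argument for the median then becomes exactly Fact~\ref{median}, and the split of the window as $6=3+3$ (Fact~\ref{norm} part 2 with $\beta=\sqrt{2\ln 40}$, Fact~\ref{median} with $\delta=1/20$) is precisely what the paper does.

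A secondary quibble: in $G_2$ you cannot take $\eta$ ``as small as we please'' once you have committed to the radius bound $\Delta/(78\,s\sqrt D)$; Fact~\ref{ball-proj} with ball radius $\Delta/(960\,s\sqrt d)$ gives a projected radius of order $\frac{\Delta}{240\,s\sqrt D}\sqrt{2\left(1+\ln(2/\eta)/d\right)}$, so at $d=1$ your target forces $\eta$ of order $0.05$ per ball. The numbers can still be made to close (or you can relax the radius target and take $\eta$ tiny, as the paper does with $\eta=2/e^{31}$ and projected length $\Delta/(30\,s\sqrt D)$), but since the whole lemma is a fight over constants, this and the anchoring issue above need to be tracked explicitly rather than asserted.
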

\begin{proof}
The techniques used in the proof of this lemma are the same as those used to prove a similar result in \cite{rptree}. We are giving a proof sketch here for completeness. We use the following two results from \cite{rptree}
\begin{fct}[Lemma 5 of \cite{rptree}]
\label{norm}
Fix any $x \in \R^D$. Pick a random vector $U \sim \N{0}{(1/D)I_D}$. Then for any $\alpha, \beta > 0$ :
\begin{enumerate}
	\item \label{norm-a} $\Pr{|U \cdot x| \leq \alpha \cdot\frac{\|x\|}{\sqrt{D}}}\leq \sqrt{\frac{2}{\pi}\alpha}$,
	\item \label{norm-b} $\Pr{|U \cdot x| \geq \beta \cdot\frac{\|x\|}{\sqrt{D}}}\leq \frac{2}{\beta}e^{-\beta^2/2}$.
\end{enumerate}
\end{fct}
\begin{fct}[Corollary 8 of \cite{rptree}]
\label{median}
Suppose $S \subset \R^D$ lies within ball $B(x,\Delta)$. Pick any $0 < \delta < 2/e^2$. Let this set be projected randomly onto the real line. Let us denote by $\tilde{x}$, the projection of $x$ by $\tilde{S}$, the projection of the set $S$. Then with probability atleast $1-\delta$ over the choice of random projection onto $\R$, $\left|\text{median}\{\tilde{S}\} - \tilde{x}\right| \leq \frac{\Delta}{\sqrt{D}}\cdot\sqrt{2\ln\frac{2}{\delta}}$.
\end{fct}
Projections of points, sets etc. are denoted with a tilde (\ $\tilde{}$\ ) sign. Applying Fact~\ref{ball-proj} with $\eta = \frac{2}{e^{31}}$, we get that with probability $> 1 - \frac{2}{e^{31}}$, the ball $B_1$ gets projected to an interval of length atmost $\frac{\Delta}{30s\sqrt{D}}$ centered at $\tilde{x_1}$. The same holds for $B_2$. Applying Fact~\ref{norm}\ref{norm-a} with $\alpha = \frac{384}{959}$ gives us $\left|\tilde{x_1} - \tilde{x_2} \right| \geq \frac{\Delta}{2s\sqrt{D}}$ with probability $1 - \frac{1536}{4795}$. Furthermore, an application of Fact~\ref{norm}\ref{norm-b} with $\beta = \sqrt{2\ln 40}$ shows that with probability atleast $1 - \frac{1}{54}$, $\left|\tilde{x_1} - \tilde{x}\right| \leq \frac{3\Delta}{\sqrt{D}}$. The same holds true for $\tilde{x_2}$ as well. Finally an application of Fact~\ref{median} with $\delta = \frac{1}{20}$ shows that the median of the projected set $\tilde{S}$ will lie within a distance $\frac{3\Delta}{\sqrt{D}}$ of $\tilde{x}$ (i.e. the projection of the center of the cell) with probability atleast $1 - \frac{1}{20}$.

Simple calculations show that the preceding guarantees imply that with probability atleast $\frac{1}{2}$ over the choice of random projections, the projections of both the balls will lie within the interval from which a split point would be chosen. Further more there would be a gap of atleast $\frac{\Delta}{2s\sqrt{D}} - 2\frac{\Delta}{30s\sqrt{D}}$ between the projections of the two balls. Hence, given that these good events take place, with probability atleast $\frac{\sqrt{D}}{12\Delta}\left(\frac{\Delta}{2s\sqrt{D}} - 2\frac{\Delta}{30s\sqrt{D}}\right)$ over the choice of the split point, the balls will get cleanly separated. Note that this uses independence of the choice of projection and the choice of the split point. Thus the probability of a good split is atleast $\frac{1}{56s}$.
\end{proof}

\begin{lem}
\label{bad-split-prob}
Let $B_1$ and $B_2$ be a pair of balls as described above contained in the cell $C$ that contains data of doubling dimension $d$. Then a random split of the cell is a bad split with respect to this pair with probability at most $\frac{1}{320s}$.
\end{lem}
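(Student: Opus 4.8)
The plan is to observe that \emph{being a bad split is a sub-event of splitting the ball $B_1$}, and then invoke Lemma~\ref{ball-split-prob} directly. Indeed, by the definition of a bad split (given just before Lemma~\ref{ball-split-prob}), such a split sends points of $B_1$ to \emph{both} children of the cell $C$; in the terminology of Lemma~\ref{ball-split-prob} this means precisely that the random split ``splits'' $B_1$, i.e.\ it separates the data of $S$ lying in $B_1$ into two nonempty parts. Hence the event ``the split is bad with respect to $(B_1,B_2)$'' is contained in the event ``the random split splits $B_1$'', and it suffices to bound the probability of the latter.

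Next I would apply Lemma~\ref{ball-split-prob} with $B = B_1$. Recall that the balls covering $S \cap C$ — and hence $B_1$ — were chosen to have radius $\delta = \tfrac{\Delta}{960 s \sqrt d}$ at the start of Section~\ref{size-redn}, and that $B_1$ sits inside the cell $C$ of radius $\Delta$ whose data has doubling dimension $d$. Lemma~\ref{ball-split-prob} then yields that a random split of $C$ splits $B_1$ with probability at most
\[
\frac{3\delta\sqrt d}{\Delta} \;=\; \frac{3\sqrt d}{\Delta}\cdot\frac{\Delta}{960 s \sqrt d} \;=\; \frac{3}{960 s} \;=\; \frac{1}{320 s}.
\]
Combining this with the event containment from the first paragraph gives that a random split is bad with respect to $(B_1,B_2)$ with probability at most $\tfrac{1}{320 s}$, which is the claim. (One could symmetrically argue via $B_2$, or take the minimum of the two bounds, but since $B_1$ and $B_2$ have equal radii this changes nothing.)

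There is no real obstacle here: the argument is a one-line consequence of Lemma~\ref{ball-split-prob}. The only points requiring care are bookkeeping ones — getting the logical direction right (``bad'' implies ``$B_1$ is split'', not the converse, so that the \emph{upper} bound transfers), and using the same covering radius $\tfrac{\Delta}{960 s \sqrt d}$ fixed earlier in the section so that the constant comes out to exactly $\tfrac{1}{320 s}$ as stated. This bound, together with the good-split bound of Lemma~\ref{good-split-prob}, is what will later let us argue that over many levels the good splits dominate the bad ones for each fixed pair of balls.
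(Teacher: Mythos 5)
Your proposal is correct and follows essentially the same route as the paper: the paper likewise observes that a bad split necessarily splits (at least) one of the balls, so its probability is bounded by $\Pr{A \cap B} \leq \min\{\Pr{A},\Pr{B}\}$ together with Lemma~\ref{ball-split-prob} applied with radius $\delta = \frac{\Delta}{960s\sqrt{d}}$, yielding $\frac{3}{960s} = \frac{1}{320s}$.
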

\begin{proof}
The proof of a similar result in \cite{rptree} uses a conditional probability argument. However the technique does not work here since we require a bound that is inversely proportional to $s$. We instead make a simple observation that the probability of a bad split is upper bounded by the probability that one of the balls is split since for any two events $A$ and $B$, $\Pr{A \cap B} \leq \min\{\Pr{A},\Pr{B}\}$. The result then follows from an application of Lemma~\ref{ball-split-prob}.
\end{proof}
We are now in a position to prove Theorem~\ref{strong-gen}. What we will prove is that starting with a pair of balls in a cell $C$, the probability that some cell $k$ levels below has data from both the balls is exponentially small in $k$. Thus, after going enough number of levels we can take a union bound over all pairs of balls whose centers are well separated (which are $\OO{(sd)^{2d}}$ in number) and conclude the proof.
\begin{proof}(of Theorem~\ref{strong-gen})
Consider a cell $C$ of radius $\Delta$ in the \RPM\ and fix a pair of balls contained inside $C$ with radii $\Delta/960s\sqrt{d}$ and centers separated by at least $\Delta/s - \Delta/960s\sqrt{d}$. Let $p^i_j$ denote the probability that a cell $i$ levels below $C$ has a descendant $j$ levels below itself that contains data points from both the balls. Then the following holds :
\begin{lem}
\label{go-down}
$p^0_k \leq \left(1 - \frac{1}{68s} \right)^lp^l_{k-l}$.
\end{lem}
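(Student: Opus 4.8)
The plan is to prove a one-level version of the inequality and then iterate it $l$ times. Throughout I would read $p^i_j$ as the largest value, over all cells $C''$ lying $i$ levels below $C$ that contain data from both $B_1$ and $B_2$, of the probability (over the random splits used to build the subtree rooted at $C''$) that some descendant of $C''$ exactly $j$ levels further down again contains data from both balls; since the subtree below any cell is built with fresh, independent randomness, such a worst-case bound composes cleanly across levels. The one-level claim to establish is that for every $j \ge 1$,
\[
p^i_j \;\le\; \left(1-\frac{1}{68s}\right)p^{i+1}_{j-1},
\]
and iterating it from $i=0$ up to $i=l-1$ (legitimate whenever $l \le k$) immediately yields $p^0_k \le \left(1-\frac{1}{68s}\right)^l p^l_{k-l}$, which is exactly Lemma~\ref{go-down}.

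To prove the one-level claim I would fix a cell $C''$ sitting $i$ levels below $C$ that contains data from both balls and condition on the outcome of its random split, classified as \emph{good}, \emph{bad}, or \emph{neutral} with respect to $(B_1,B_2)$. If the split is good it sends the $B_1$-data and the $B_2$-data to opposite children, so no descendant of $C''$ can contain data from both balls and the contribution to the event is $0$; by Lemma~\ref{good-split-prob} applied to $C''$ this happens with probability at least $\frac{1}{56s}$. If the split is bad, both children of $C''$ still contain data from both balls; each child lies $i+1$ levels below $C$, so by the definition of $p^{i+1}_{j-1}$ and the independence of the two subtrees the probability that a fixed child spawns such a descendant $j-1$ levels further down is at most $p^{i+1}_{j-1}$, and a union bound over the two children caps the contribution at $2p^{i+1}_{j-1}$; by Lemma~\ref{bad-split-prob} this happens with probability at most $\frac{1}{320s}$. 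In every remaining (neutral) split at least one of the two balls lies entirely on one side, so at most one child of $C''$ contains data from both balls and the contribution is at most $p^{i+1}_{j-1}$.

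Writing $g = \Pr{\text{good}}$ and $b = \Pr{\text{bad}}$, combining the three cases gives
\[
p^i_j \;\le\; 0\cdot g + 2p^{i+1}_{j-1}\cdot b + p^{i+1}_{j-1}\,(1-g-b) \;=\; (1-g+b)\,p^{i+1}_{j-1},
\]
and since $g \ge \frac{1}{56s}$ and $b \le \frac{1}{320s}$ we get $1-g+b \le 1-\frac{1}{56s}+\frac{1}{320s} = 1-\frac{33}{2240s} \le 1-\frac{1}{68s}$, the final step because $33\cdot 68 = 2244 \ge 2240$. The one point that genuinely needs care — and the main obstacle — is precisely the branching caused by bad splits: a good split removes a bad cell, but a bad split replaces it by two, and this doubling could in principle overwhelm the per-level gain coming from good splits. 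The argument survives only because Lemmas~\ref{good-split-prob} and~\ref{bad-split-prob} leave the good probability larger than twice the bad probability, with enough slack that the effective per-level decay factor $1-g+b$ stays below $1$, namely at most $1-\frac{1}{68s}$. The remaining bookkeeping — keeping $p^i_j$ a worst-case quantity so the one-level estimate can be reapplied at every level — is harmless, resting on the independence of the random splits in disjoint subtrees of the \RPM.
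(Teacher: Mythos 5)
Your proposal is correct and follows essentially the same route as the paper: condition on whether the top split is good, bad, or neutral (contributing $0$, at most $2p^{i+1}_{j-1}$, and at most $p^{i+1}_{j-1}$ respectively), bound the resulting factor $1-g+b$ by $1-\frac{1}{68s}$ using Lemmas~\ref{good-split-prob} and~\ref{bad-split-prob}, and iterate down $l$ levels. Your added care in defining $p^i_j$ as a worst case over cells and invoking independence of disjoint subtrees only makes explicit what the paper leaves implicit.
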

\begin{proof}
We have the following expression for $p^0_k$ :
\begin{eqnarray*}
p_k^0 &\leq& \Pr{\text{split at level $0$ is a good split}} \cdot 0 +\\
      &&     \Pr{\text{split at level $0$ is a bad split}} \cdot 2p^1_{k-1} +\\
      &&     \Pr{\text{split at level $0$ is a neutral split}} \cdot p^1_{k-1}\\
      &\leq&    \frac{1}{320s} \cdot 2p^1_{k-1} + \left(1 - \frac{1}{320s} - \frac{1}{56s} \right) \cdot p^1_{k-1}\\
      &=&    \left(1 + \frac{1}{320s} - \frac{1}{56s} \right) \cdot p^1_{k-1}\\
      &=&    \left(1 - \frac{1}{68s} \right)p^1_{k-1}\\
      &\leq&    \left(1 - \frac{1}{68s} \right)^2p^2_{k-2}\qquad \left(\text{Similarly } p^1_{k-1} \leq \left(1 - \frac{1}{68s} \right)p^2_{k-2} \right)\\
      &&     \vdots\\
      &\leq&    \left(1 - \frac{1}{68s} \right)^lp^l_{k-l}
\end{eqnarray*}
\vspace*{-20pt}\[\qedhere\]
\end{proof}
Note that this gives us $p^0_k \leq \left(1 - \frac{1}{68s} \right)^k$ as a corollary. However using this result would require us to go down $k = \Omega(sd\log(sd))$ levels before $p^0_k = \frac{1}{\Omega\left( (sd)^{2d} \right)}$ which results in a bound that is worse (by a factor logarithmic in $s$) than the one given by Theorem~\ref{weak-gen}. This can be attributed to the small probability of a good split for a tiny pair of balls in large cells. However, here we are completely neglecting the fact that as we go down the levels, the radii of cells go down as well and good splits become more frequent.

Indeed setting $s = 2$ in Theorems~\ref{good-split-prob} and \ref{bad-split-prob} tells us that if the pair of balls were to be contained in a cell of radius $\frac{\Delta}{s/2}$ then the good and bad split probabilities are $\frac{1}{112}$ and $\frac{1}{640}$ respectively. This paves way for an inductive argument : assume that with probability $> 1 - 1/4$, in $L(s)$ levels, the size of all descendants go down by a factor $s$. Denote by $p_g^l$ the probability of a good split in a cell at depth $l$ and by $p_b^l$ the corresponding probability of a bad split. Set $l^\ast = L(s/2)$ and let $E$ be the event that the radius of every cell at level $l^\ast$ is less than $\frac{\Delta}{s/2}$. Let $C'$ represent a cell at depth $l^\ast$. Then,
\begin{eqnarray*}
p_g^{l^\ast} &\geq& \Pr{\text{good split in $C' | E$}}\cdot\Pr{E} \geq \frac{1}{112}\cdot\left(1 - \frac{1}{4}\right) \geq \frac{1}{150}\\
p_b^{l^\ast} &=& \Pr{\text{bad split in $C' | E$}}\cdot\Pr{E} + \Pr{\text{bad split in $C' | \neg E$}}\cdot\Pr{\neg E}\\
 &\leq& \frac{1}{640}\cdot 1 +  \frac{1}{640}\cdot\frac{1}{4} \leq \frac{1}{512}
\end{eqnarray*}
Notice that now, for any $m > 0$, we have $p^{l^\ast}_m \leq \left(1 - \frac{1}{213}\right)^m$. Thus, for some constant $c_4$, setting $k = l^\ast + c_4 d\log (sd)$ and applying Lemma~\ref{go-down} gives us $p^0_k \leq \left(1 - \frac{1}{68s} \right)^{l^\ast}\left(1 - \frac{1}{213} \right)^{c_4d\log(sd)} \leq \frac{1}{4(sd)^{2d}}$. Thus we have
\[
L(s) \leq L(s/2) + c_4 d\log(sd)
\]
which gives us the desired result on solving the recurrence i.e. $L(s) = \OO{d\log s\log sd}$.
\end{proof}

\section{A packing lemma for \RPM}
\label{packing}
In this section we prove a probabilistic packing lemma for \RPM. A formal statement of the result follows :
\begin{thm}[\textbf{Main}]
\label{packing-thm}
Given any fixed ball $B(x,R) \subset \R^D$, with probability greater than $1/2$ (where the randomization is over the construction of the \RPM), the number of disjoint \RPM\ cells of radius greater than $r$ that intersect $B$ is at most $\left( \frac{R}{r} \right)^{\OO{d\log d\log(dR/r)}}$.
\end{thm}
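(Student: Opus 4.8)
The strategy is to reduce the packing bound to the size–reduction theorem (Theorem~\ref{strong-gen}) together with an ``effective aspect ratio'' statement for the cell of the \RPM\ that just barely contains the ball $B$. First I would locate the smallest \RPM\ cell $C^\ast$ that completely contains $B(x,R)$; call its radius $\Delta$. The key geometric observation is that, since $C^\ast$'s parent was split and $B$ survived inside $C^\ast$, the ball $B$ was not cut by that split, so $C^\ast$ cannot be ``too large'' relative to $R$ without contradicting the probabilistic behaviour of random splits. More precisely, I would use Lemma~\ref{ball-split-prob}: a ball of radius $R$ inside a cell of radius $\Delta$ is split with probability at most $3R\sqrt{d}/\Delta$; running this over the (few) ancestor levels above $C^\ast$ and taking a union bound, one gets that with probability close to $1$, the first cell containing $B$ has radius $\Delta \leq \mathrm{poly}(d)\cdot R$ — this is the ``effective aspect ratio bound for the smallest cell containing $B$'' promised in the introduction.

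Once $\Delta = \OO{\mathrm{poly}(d)\cdot R}$, I would apply Theorem~\ref{strong-gen} inside $C^\ast$ with size–reduction factor $s = \Theta(\Delta/r) = \OO{\mathrm{poly}(d)\cdot R/r}$: with probability at least $1-1/4$, every descendant of $C^\ast$ more than $\ell := c_3\log s\, d\log(sd)$ levels below $C^\ast$ has radius at most $\Delta/s \leq r$. Hence every \RPM\ cell of radius $> r$ that intersects $B$ — indeed, that is contained in $C^\ast$, since cells are nested and $B\subseteq C^\ast$ — lies within the first $\ell$ levels below $C^\ast$. The number of cells in the first $\ell$ levels of a binary tree is at most $2^{\ell+1}$, and with $s = \OO{\mathrm{poly}(d)R/r}$ this is $2^{\OO{d\log d\log(sd)}} = \left(R/r\right)^{\OO{d\log d\log(dR/r)}}$, absorbing the $\mathrm{poly}(d)$ factors and the $\log d$ term into the exponent as in the statement. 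Taking a union bound over the two bad events (the aspect-ratio event and the size-reduction event), the total failure probability is below $1/2$, giving the theorem.

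**Main obstacle.** The delicate point is the first step: controlling the radius $\Delta$ of the smallest cell containing $B$. One cannot simply say ``$B$ wasn't split, so $\Delta$ is small'' deterministically, because a huge cell can fail to split a tiny ball with probability very close to $1$; rather, one must argue that if $\Delta$ were, say, $\gg d^{3/2} R$, then at \emph{every} one of the many ancestor levels the surviving cell still had $B$ inside and uncut, and the probability of this happening for many consecutive levels — while the cell radius only decreases geometrically on average — is small. Making this rigorous requires combining Lemma~\ref{ball-split-prob} (for the ``$B$ not split'' events) with the size–reduction guarantee applied to the \emph{ancestors} of $C^\ast$ (to certify that cell radii do shrink as one descends), and then summing a geometric series of split probabilities $\sum_j 3R\sqrt{d}/\Delta_j$ where $\Delta_j \approx \Delta \cdot 2^{j/(c_1 d\log d)}$. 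The bookkeeping of these two intertwined randomized guarantees, and keeping the cumulative failure probability below the target constant, is where the real work lies; the counting and the invocation of Theorem~\ref{strong-gen} afterwards are routine.
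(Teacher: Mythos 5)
Your overall architecture is the same as the paper's, and its second half is fine: once $B$ lies unsplit in a cell $C^\ast$ of radius $R'=\OO{R\,\mathrm{poly}(d)}$, apply Theorem~\ref{strong-gen} inside $C^\ast$ with $s=\Theta(R'/r)$, count at most $2^{\ell}$ cells in the first $\ell$ levels of a binary tree, and union-bound the two failure events. The genuine gap is in the first half — the effective aspect-ratio bound — which is precisely where the paper introduces a new idea that you do not supply. Your proposed mechanism (certify that the ancestor chain of $B$ shrinks via the size-reduction guarantee, and sum per-level split probabilities from Lemma~\ref{ball-split-prob}) hits two concrete obstacles that you flag but leave unresolved. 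First, the shrinkage guarantee (Fact~\ref{main-rptree-extn}) fails with probability $\delta$ per halving at the cost of $\log(1/\delta)$ extra levels, and the number of halvings between the data diameter $\Delta_{\max}$ and $\mathrm{poly}(d)\,R$ is not controlled by $R$, $r$ or $d$; a flat union bound over these scales does not yield a constant success probability, and boosting the confidence scale-by-scale changes the number of levels spent at each scale and hence the very split-exposure sum you then need to recompute. Second, Lemma~\ref{ball-split-prob} bounds the split probability by $3R\sqrt{d}/\Delta$ where $\Delta$ is the \emph{current} cell radius, so your geometric series $\sum_j 3R\sqrt{d}/\Delta_j$ requires a \emph{lower} bound on the radius of the cell containing $B$ at each level, whereas the size-reduction theorem only provides upper bounds; the accounting must be organized by the dyadic scale of $B$'s own cell rather than by level count. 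None of this bookkeeping is carried out, so the claimed bound $\Delta\leq\mathrm{poly}(d)\,R$ remains unproven in your write-up.

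The paper sidesteps all of this with a self-contained device (Lemma~\ref{ball-go-down} and Theorem~\ref{aspect-ratio-proof}) that uses no shrinkage certification at all: surround $B$ with $d^{\OO{d}}$ balls $B_i$ of radius $\Delta/512\sqrt{d}$ covering the annulus at distance $\Delta/2$ from $B$; a split separating $B$ from a fixed $B_i$ (\emph{useful}) occurs with probability at least $1/192$ (argued as in Lemma~\ref{good-split-prob}), while a split cutting $B$ (\emph{useless}) occurs with probability at most $3R\sqrt{d}/\Delta$; a race argument over the independent splits then shows that $B$ is cut before being separated from every $B_i$ — i.e.\ before it lands in a cell of radius less than $\Delta/2$ — with probability only $\OO{Rd\sqrt{d}\log d/\Delta}$. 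Because this per-scale failure probability is itself inversely proportional to $\Delta$, telescoping over dyadic scales from $\Delta_{\max}$ down to $\Delta^\ast=4c_5Rd\sqrt{d}\log d$ gives a geometric sum bounded by $1/4$ regardless of how many scales there are. Unless you either adopt this annulus-covering argument or complete the scale-dependent bookkeeping your route demands, the aspect-ratio step, and hence the packing theorem, does not follow from your sketch.
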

Data structures such as BBD-trees give a bound of the form $\OO{\frac{R}{r}}^D$ which behaves like $\left(\frac{R}{r}\right)^{\OO{1}}$ for fixed $D$. In comparison, our result behaves like  $\left(\frac{R}{r}\right)^{\OO{\log \frac{R}{r}}}$ for fixed $d$. We will prove the result in two steps : first of all we will show that with high probability, the ball $B$ will be completely inscribed in an \RPM\ cell $C$ of radius no more than $\OO{Rd\sqrt{d}\log d}$. Thus the number of disjoint cells of radius at least $r$ that intersect this ball is bounded by the number of descendants of $C$ with this radius. To bound this number we then invoke Theorem~\ref{strong-gen} and conclude the proof.

\subsection{An effective aspect ratio bound for \RPM\ cells}
\label{aspect-ratio}
In this section we prove an upper bound on the radius of the smallest \RPM\ cell that completely contains a given ball $B$ of radius $R$. Note that this effectively bounds the aspect ratio of this cell.
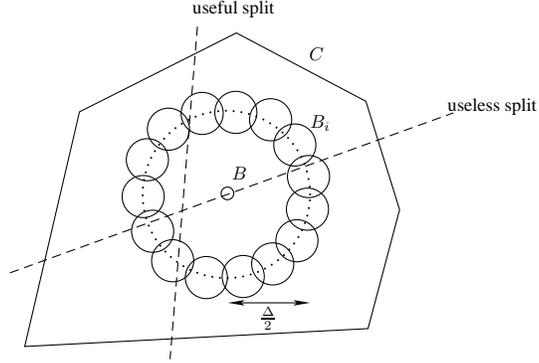
\begin{figure}
	\begin{center}
		\scalebox{0.7}{
			\begin{pspicture}(0,-4.265)(9.800625,4.285)
				\pspolygon[linewidth=0.02](4.3,2.955)(1.32,1.455)(0.28,-3.005)(6.8,-2.665)(7.4,-0.405)(6.76,1.655)
				\pscircle[linewidth=0.02,dimen=outer](4.13,-0.095){0.13}
				\pscircle[linewidth=0.02,dimen=outer](4.43,-1.675){0.41}
				\pscircle[linewidth=0.02,dimen=outer](3.09,-1.375){0.41}
				\pscircle[linewidth=0.02,dimen=outer](3.73,-1.695){0.41}
				\pscircle[linewidth=0.02,dimen=outer](2.71,-0.815){0.41}
				\pscircle[linewidth=0.02,dimen=outer](2.53,-0.155){0.41}
				\pscircle[linewidth=0.02,dimen=outer](2.61,0.545){0.41}
				\pscircle[linewidth=0.02,dimen=outer](3.01,1.125){0.41}
				\pscircle[linewidth=0.02,dimen=outer](4.99,-1.435){0.41}
				\pscircle[linewidth=0.02,dimen=outer](3.65,1.425){0.41}
				\pscircle[linewidth=0.02,dimen=outer](4.29,1.445){0.41}
				\pscircle[linewidth=0.02,dimen=outer](4.95,1.305){0.41}
				\pscircle[linewidth=0.02,dimen=outer](5.41,0.825){0.41}
				\pscircle[linewidth=0.02,dimen=outer](5.67,0.225){0.41}
				\pscircle[linewidth=0.02,dimen=outer](5.45,-0.995){0.41}
				\pscircle[linewidth=0.02,dimen=outer](5.65,-0.395){0.41}
				\pscircle[linewidth=0.04,dimen=outer,linestyle=dotted](4.11,-0.115){1.61}
				\psline[linewidth=0.02cm,arrows=<->,arrowlength=3,arrowsize=3pt,arrowinset=0.2]
				(4.14,-2.175)(5.7,-2.175)
				\usefont{T1}{ptm}{m}{n}
				\rput(4.9014063,-2.465){$\frac{\Delta}{2}$}
				\psline[linewidth=0.02cm,linestyle=dashed](3.56,3.075)(3.04,-3.245)
				\psline[linewidth=0.02cm,linestyle=dashed](0.0,-1.605)(8.42,1.435)
				\usefont{T1}{ptm}{m}{n}
				\rput(4.245,3.405){useful split}
				\usefont{T1}{ptm}{m}{n}
				\rput(9.160781,1.565){useless split}
				\usefont{T1}{ptm}{m}{n}
				\rput(5.8914062,1.245){$B_i$}
				\usefont{T1}{ptm}{m}{n}
				\rput(4.3614063,0.285){$B$}
				\usefont{T1}{ptm}{m}{n}
				\rput(5.8214064,2.565){$C$}
			\end{pspicture} 
		}
		\caption{Balls $B_i$ are of radius $\Delta/512\sqrt{d}$ and their centers are $\Delta/2$ far from the center of $B$.}
		\label{aspect}
	\end{center}
\end{figure}
Consider any cell $C$ of radius $\Delta$ that contains $B$. We proceed with the proof by first showing that the probability that $B$ will be split before it lands up in a cell of radius $\Delta/2$ is at most a quantity inversely proportional to $\Delta$. Note that we are not interested in all descendants of $C$ - only  the ones ones that contain $B$. That is why we argue differently here. We consider balls of radius $\Delta/512\sqrt{d}$ surrounding $B$ at a distance of $\Delta/2$ (see Figure~\ref{aspect}). These balls are made to cover the annulus centered at $B$ of mean radius $\Delta/2$ and thickness $\Delta/512\sqrt{d}$ -- clearly $d^{\OO{d}}$ balls suffice. Without loss of generality assume that the centers of all these balls lie in $C$.

Notice that if $B$ gets separated from all these balls without getting split in the process then it will lie in a cell of radius $< \Delta/2$. Fix a $B_i$ and call a random split of the \RPM\ \emph{useful} if it separates $B$ from $B_i$ and \emph{useless} if it splits $B$. Using a proof technique similar to that used in Lemma~\ref{good-split-prob} we can show that the probability of a useful split is at least $\frac{1}{192}$ whereas Lemma~\ref{ball-split-prob} tells us that the probability of a useless split is at most $\frac{3R\sqrt{d}}{\Delta}$.

\begin{lem}
\label{ball-go-down}
There exists a constant $c_5$ such that the probability of a ball of radius $R$ in a cell of radius $\Delta$ getting split before it lands up in a cell of radius $\Delta/2$ is at most $\frac{c_5Rd\sqrt{d}\log d}{\Delta}$.
\end{lem}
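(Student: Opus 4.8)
The plan is to combine the two per-split estimates already recorded just above the statement — a \emph{useful} split (one separating $B$ from a fixed covering ball $B_i$) has probability $\ge\frac{1}{192}$, while a \emph{useless} split (one that splits $B$) has probability $\le\frac{3R\sqrt d}{\Delta}$ — with a geometric observation that turns ``$B$ has been separated from every $B_i$'' into ``$B$ sits in a cell of radius $<\Delta/2$'', and a counting observation bounding how many of the $d^{\OO{d}}$ covering balls can survive unseparated after only $\OO{d\log d}$ levels.

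I would first pin down the geometric observation, since it is the bridge from a good event to success. Every \RPM\ cell is an intersection of halfspaces, hence convex. When the argument behind Lemma~\ref{good-split-prob} produces a useful split for $B_i$, it actually leaves a gap between the projections of $B$ and $B_i$ and drops the split point into that gap, so the splitting hyperplane lies strictly between the two balls and the child containing $B$ is disjoint, as a region, from $B_i$. Consequently, once $B$ has been separated from every $B_i$, the cell $C'$ currently holding $B$ is a convex set that contains the centre $x$ of $B$ and is disjoint from $\bigcup_i B_i$, which covers the whole (thin but nonempty) annulus of radii $\approx\Delta/2$ about $x$. A point of $C'$ at distance $>\Delta/2$ from $x$ would, by convexity, drag the segment from $x$ to it — and hence a point of that annulus — into $C'$, a contradiction; so radius$(C')<\Delta/2$. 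This is exactly the claim stated informally in the text.

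Next I would run a renewal argument in blocks of $L=\Theta(d\log d)$ consecutive levels. While the cell holding $B$ has radius in $[\Delta/2,\Delta]$, Lemma~\ref{ball-split-prob} keeps the per-level useless-split probability $\OO{R\sqrt d/\Delta}$ and the usual argument keeps the per-level useful-split probability $\ge\frac{1}{192}$, uniformly. For one fixed $B_i$ the events ``this split separates $B$ from $B_i$'' and ``this split splits $B$'' are disjoint, so $B_i$ stays unseparated through a block of $L$ levels with probability $\le(1-1/192)^{L}$; a union bound over the $d^{\OO{d}}$ balls shows that, for a suitable absolute constant in $L=\Theta(d\log d)$, a block fails to separate all the $B_i$ — and hence, by the geometric observation, fails to drive $B$ into a radius-$<\Delta/2$ cell — with probability at most $\frac13$. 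Meanwhile a union bound over the $\le L$ splits of a block bounds the probability that $B$ is split inside that block by $L\cdot\OO{R\sqrt d/\Delta}=\OO{Rd\sqrt d\log d/\Delta}$. Writing $\mathcal E$ for the event that $B$ is split before its cell shrinks past $\Delta/2$, each block either ends in success, or in $\mathcal E$ (probability $\le\OO{Rd\sqrt d\log d/\Delta}$), or in a ``continue'' state (probability $\le\frac13$) from which the estimate repeats; summing the resulting geometric series gives $\Pr{\mathcal E}\le\OO{Rd\sqrt d\log d/\Delta}$, which is the claim with a suitable $c_5$.

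The step that will take the most care is the probabilistic bookkeeping of this renewal. One must verify that the useful- and useless-split probabilities hold uniformly along the whole descent — they only degrade once the radius falls below $\Delta/2$, but by then we have already won; that conditioning on ``$B$ not yet split'' cannot hurt the separation probabilities, which is precisely why disjointness of those two events at each split is invoked; and that the $B_i$, fixed once with respect to $x$ and the scale $\Delta/2$, still support the useful-split estimate even after some of them have partly drifted out of the current cell. There is also the harmless degenerate case $R\sqrt d\gtrsim\Delta$, in which the asserted bound already exceeds $1$ and nothing is to prove. None of these points is deep, but they are where an otherwise transparent argument can quietly break.
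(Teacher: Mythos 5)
Your argument is built from exactly the same ingredients as the paper's --- the per-split probabilities $\geq\frac{1}{192}$ (useful) and $\leq\frac{3R\sqrt d}{\Delta}$ (useless), the $d^{\OO d}$-sized cover, and the observation that separating $B$ from every $B_i$ forces the cell down past radius $\Delta/2$ --- but the bookkeeping differs. The paper decomposes directly over the index of the \emph{first} split of $B$: writing $E[i]$ for ``$B$ is split for the first time at step $i$ and the first $i-1$ splits failed to separate all $B_j$,'' it bounds $\Pr{E[i]}\leq\min\left\{1,(1-\tfrac1{192})^{i-1}N\right\}\cdot\frac{3R\sqrt d}{\Delta}$ and sums, getting $\OO{d\log d}$ unit terms plus a convergent tail. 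You instead package the descent into blocks of $L=\Theta(d\log d)$ levels, show each block continues with probability $\leq\frac13$ and incurs a split with probability $\leq L\cdot\OO{R\sqrt d/\Delta}$, and sum a geometric series. The two routes give the same bound with the same amount of hand-waving about conditioning; the paper's is a touch more economical since it never has to argue that the block estimates re-apply from an arbitrary continuation state, while yours is a bit more explicit about why the conditioning is harmless (the disjointness of the ``useful'' and ``useless'' events at a single split).

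The one place your write-up actually goes wrong is the geometric observation. You assert that $\bigcup_i B_i$ ``covers the whole (thin but nonempty) annulus of radii $\approx\Delta/2$ about $x$,'' and then run the segment-through-the-annulus argument. But $d^{\OO d}$ balls of radius $\Delta/(512\sqrt d)$ cannot cover the full $D$-dimensional spherical shell --- the $d^{\OO d}$ count comes from the doubling dimension of the \emph{data}, so the $B_i$ cover only the data lying in the annulus (or, for the argument to close, the radial shadows of the data onto the sphere of radius $\Delta/2$, which still has a $d^{\OO d}$-sized cover because $S\cup\{x\}$ has doubling dimension $\OO{d}$ and radial projection onto that sphere is Lipschitz on the relevant domain). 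With the cover chosen that way, the segment from $x$ to an offending data point of $C'$ does pass through a covered point, and the convexity contradiction is restored; note also that ``radius$(C')<\Delta/2$'' should be read as a statement about the data in $C'$, which is what the cover actually controls, not about the convex polytope $C'$ itself. To be fair, the paper only asserts this implication without proof, so you are filling a genuine gap --- but the filling you propose does not quite hold as stated.
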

\begin{proof}
The only bad event for us is the one in which $B$ gets split before it gets separated from all the $B_j$'s. Call this event $E$. Also, denote by $E[i]$ the bad event that $B$ gets split for the first time in the $i^{\text{th}}$ split and the preceding $i-1$ splits are incapable of separating $B$ from all the $B_j$'s. Thus $\Pr{E} \leq \sum\limits_{i > 0}{\Pr{E[i]}}$. Since any given split is a useful split (i.e. separates $B$ from a fixed $B_j$) with probability $> \frac{1}{192}$, the probability that $i-1$ splits will fail to separate all $B_j$s from the $B$ (while not splitting $B$) is at most $\min\left\{1, \left(1 - \frac{1}{192}\right)^{i-1}\cdot N \right\}$ where $N = d^{\OO{d}}$ is the number of balls $B_j$. Since all splits in an \RPM\ are independent of each other, we have $\Pr{E[i]} \leq \min\left\{1, \left(1 - \frac{1}{192}\right)^{i-1}\cdot N \right\}\cdot\frac{3R\sqrt{d}}{\Delta}$. Let $k$ be such that $\left(1 - \frac{1}{192}\right)^{k-1} \leq \frac{1}{4N}$. Clearly $k = \OO{d\log d}$ suffices. Thus we have
\[
\Pr{E} \leq \frac{3R\sqrt{d}}{\Delta}\sum\limits_{i > 0}{\min\left\{1, \left(1 - \frac{1}{192}\right)^{i-1}\cdot N \right\}} \leq \frac{3R\sqrt{d}}{\Delta}\left(\sum\limits_{i=1}^k{1} + \sum\limits_{i=1}^\infty{\frac{1}{4}\left(1 - \frac{1}{192}\right)^i}\right)
\]
which gives us $\Pr{E} = \OO{\frac{Rd\sqrt{d}\log d}{\Delta}}$ since the second summation is just a constant.
\end{proof}
We now state our result on the ``effective'' bound on aspect ratios of \RPM\ cells.
\begin{thm}
\label{aspect-ratio-proof}
There exists a constant $c_6$ such that with probability $> 1 - 1/4$, a given (fixed) ball $B$ of radius $R$ will be completely inscribed in an \RPM\ cell $C$ of radius no more than $c_6\cdot Rd\sqrt{d}\log d$.
\end{thm}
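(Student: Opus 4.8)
The plan is to follow the fixed ball $B$ down the chain of \RPM\ cells that contain it and to show that, with probability at least $3/4$, no separating hyperplane ever cuts through $B$ before the cell containing it has shrunk to radius $\OO{Rd\sqrt d\log d}$. The whole argument is driven by Lemma~\ref{ball-go-down}, which already bounds the probability of $B$ being cut during a single halving of the containing cell's radius; the only remaining work is to concatenate $\OO{\log(\Delta_0/R)}$ such halvings and observe that their failure probabilities form a convergent geometric series.

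In detail, let $C_0$ be an ancestor cell known to contain $B$ (e.g.\ the root of the \RPM) and let $\Delta_0$ be its radius. If $\Delta_0 \leq c_6 R d\sqrt d\log d$ we are already done, so assume $\Delta_0$ is larger and set $\Delta_j = \Delta_0/2^j$ and $j^\ast = \min\{j : \Delta_j \leq c_6 R d\sqrt d\log d\}$. Consider the (random) nested chain $C_0 \supseteq C_1 \supseteq \cdots$ of cells containing $B$; since each child is a subset of its parent, the radii along this chain are non-increasing, and the chain terminates only when some split finally cuts $B$. For $j < j^\ast$ let $T_j$ be the first cell of the chain with radius $\leq \Delta_j$, so $T_0 = C_0$; then either $T_{j+1}=T_j$, or the radius $\rho_j$ of $T_j$ lies in $(\Delta_j/2,\Delta_j]$, and in the latter case the event ``$B$ reaches a cell of radius $\leq \Delta_{j+1}=\Delta_j/2$ without being cut'' is implied by the favorable event of Lemma~\ref{ball-go-down} applied, inside the subtree rooted at $T_j$, to the ball $B$ (of radius $R$) in the cell $T_j$ (of radius $\rho_j \geq \Delta_j/2 > R$), with target radius $\rho_j/2 \leq \Delta_{j+1}$. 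Hence the conditional probability of $B$ being cut at scale $j$ is at most $c_5 R d\sqrt d\log d/\rho_j \leq 2 c_5 R d\sqrt d\log d/\Delta_j$.

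A union bound over $j = 0,1,\dots,j^\ast-1$ (each term bounded as above, conditioned on the cell $T_j$ reached so far) then gives total failure probability at most
\[
\sum_{j=0}^{j^\ast-1} \frac{2c_5 R d\sqrt d\log d}{\Delta_j} \;=\; \frac{2c_5 R d\sqrt d\log d}{\Delta_0}\sum_{j=0}^{j^\ast-1} 2^j \;\leq\; \frac{2^{j^\ast+1} c_5 R d\sqrt d\log d}{\Delta_0}.
\]
By minimality of $j^\ast$, $\Delta_{j^\ast-1} = \Delta_0/2^{j^\ast-1} > c_6 R d\sqrt d\log d$, i.e.\ $2^{j^\ast} < 2\Delta_0/(c_6 R d\sqrt d\log d)$, so the right-hand side is less than $4c_5/c_6$; taking $c_6 = 16 c_5$ makes it at most $1/4$. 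On the complementary event $B$ is never cut until it lies in a cell of radius $\leq \Delta_{j^\ast} \leq c_6 R d\sqrt d\log d$, and $B$ is then completely inscribed in that cell, which is exactly the claim.

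I expect the main difficulty to be purely organizational: the radius of the chain decreases gradually rather than in exact halvings, which forces the dyadic scales $\Delta_j$ and the application of Lemma~\ref{ball-go-down} at the true radius $\rho_j$ (at the cost of an innocuous factor of $2$), and one must keep the union-bound terms geometric so that the tail sum stays bounded by a constant. One should also check the degenerate case $\Delta_0 \leq c_6 R d\sqrt d\log d$ (trivial) and confirm $\rho_j \gg R$ throughout the range $j < j^\ast$, so that Lemma~\ref{ball-go-down} is never invoked vacuously.
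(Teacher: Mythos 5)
Your proposal is correct and follows essentially the same route as the paper: track $B$ down through dyadic radius scales, bound the probability of a cut at each scale by Lemma~\ref{ball-go-down}, and sum the resulting geometric series to get total failure probability at most $1/4$ by choosing $c_6$ as a suitable multiple of $c_5$. Your handling of the fact that the cell radius when crossing a dyadic scale lies somewhere in $(\Delta_j/2,\Delta_j]$ (costing a factor of $2$) is in fact slightly more careful than the paper's, which applies the lemma at the exact radii $\Delta_{\max}/2^i$.
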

\begin{proof}
Let $\Delta^\ast = 4c_5Rd\sqrt{d}\log d$ and $\Delta_{\max}$ be the radius of the entire dataset. Denote by $F[i]$ the event that $B$ ends up unsplit in a cell of radius $\frac{\Delta_{\max}}{2^i}$. The event we are interested in is $F[m]$ for $m = \log\frac{\Delta_{\max}}{\Delta^\ast}$. Note that $\Pr{F[m]|F[m-1]}$ is exactly $\Pr{E}$ where $E$ is the event described in Lemma~\ref{ball-go-down} for appropriately set value of radius $\Delta$. Also $\Pr{F[m]|\neg F[m-1]} = 0$. Thus we have
\begin{eqnarray*}
\Pr{F[m]} &=& \prod\limits_{i=0}^{m-1}{\Pr{\left. F[i+1] \right| F[i]}} = \prod\limits_{i=0}^{m-1}{\left(1 - \frac{c_5Rd\sqrt{d}\log d}{\Delta_{\max}/2^i}\right)} \geq 1 - \sum\limits_{i=0}^{m-1}{\frac{c_5Rd\sqrt{d}\log d}{\Delta_{\max}/2^i}}\\
          &=& 1 - \sum\limits_{i=0}^{m-1}{\frac{c_5Rd\sqrt{d}\log d}{2^{m-i}\Delta^\ast}} = 1 - \frac{1}{4} \sum\limits_{i=0}^{m-1}{\frac{1}{2^{m-i}}} \geq 1 - \frac{1}{4}
\end{eqnarray*}
Setting $c_6 = 4c_5$ gives us the desired result.
\end{proof}
\begin{proof}(of Theorem~\ref{packing-thm})
Given a ball $B$ of radius $R$, Theorem~\ref{aspect-ratio-proof} shows that with probability at least $3/4$, $B$ will lie in a cell $C$ of radius at most $R' = \OO{Rd\sqrt{d}\log d}$. Hence all cells of radius atleast $r$ that intersect this ball must be either descendants or ancestors of $C$. Since we want an upper bound on the largest number of such disjoint cells, it suffices to count the number of descendants of $C$ of radius no less than $r$. We know from Theorem~\ref{strong-gen} that with probability at least $3/4$ in $\log(R'/r)d\log(dR'/r)$ levels the radius of all cells must go below $r$. The result follows by observing that the \RPM\ is a binary tree and hence the number of children can be at most $2^{\log(R'/r)d\log(dR'/r)}$. The success probability is at least $(3/4)^2 > 1/2$.
\end{proof}

\section{Local covariance dimension of a smooth manifold}
\label{loc-cov-manifold}
The second variant of \RP, namely \RPm, adapts to the local covariance dimension (see definition below) of data. We do not go into the details of the guarantees presented in \cite{rptree} due to lack of space. Informally, the guarantee is of the following kind : given data that has small local covariance dimension, on expectation, a data point in a cell of radius $r$ in the \RPm\ will be contained in a cell of radius $c_7\cdot r$ in the next level for some constant $c_7 < 1$. The randomization is over the construction of \RPm\ as well as choice of the data point. This gives per-level improvement albeit in expectation whereas \RPM\ gives improvement in the worst case but after a certain number of levels.

We will prove that a $d$-dimensional Riemannian submanifold $\M$ of $\R^D$ has bounded local covariance dimension thus proving that \RPm\ adapts to manifold dimension as well.
\begin{defn}
A set $S \subset \R^D$ has local covariance dimension $(d,\epsilon,r)$ if there exists an isometry $M$ of $\R^D$ under which the set $S$ when restricted to any ball of radius $r$ has a covariance matrix for which some $d$ diagonal elements contribute a $(1 - \epsilon)$ fraction of its trace.
\end{defn}
This is a more general definition than the one presented in \cite{rptree} which expects the top $d$ eigenvalues of the covariance matrix to account for a $(1 - \epsilon)$ fraction of its trace. However, all that \cite{rptree} requires for the guarantees of \RPm\ to hold is that there exist $d$ orthonormal directions such that a $(1 - \epsilon)$ fraction of the energy of the dataset i.e. $\sum_{x\in S}{\|x - mean(S)\|^2}$ is contained in those $d$ dimensions. This is trivially true when $\M$ is a $d$-dimensional affine set. However we also expect that for small neighborhoods on smooth manifolds, most of the energy would be concentrated in the tangent plane at a point in that neighborhood (see Figure~\ref{manifold}). Indeed, we can show the following :
\begin{thm}[\textbf{Main}]
\label{loc-cov}
Given a data set $S \subset \M$ where $\M$ is a $d$-dimensional Riemannian manifold with condition number $\tau$, then for any $\epsilon \leq \frac{1}{4}$, $S$ has local covariance dimension $\left(d,\epsilon,\frac{\sqrt{\epsilon}\tau}{3}\right)$.
\end{thm}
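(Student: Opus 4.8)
The plan is to fix an arbitrary ball $B = B(z,\rho)$ with $\rho = \sqrt{\epsilon}\,\tau/3$ and to exhibit a good $d$-dimensional coordinate subspace inside it. If $S\cap B$ has at most one point the covariance matrix is $0$ and there is nothing to prove, so assume $n := |S\cap B|\ge 2$, pick a base point $p_0\in S\cap B$ (so every point of $S\cap B$ lies within distance $2\rho$ of $p_0$), and let $T = T_{p_0}\M$ be the tangent space of $\M$ at $p_0$. Since the covariance matrix transforms by conjugation under the orthogonal part of an isometry, we may work directly in coordinates in which $T = \mathrm{span}(e_1,\dots,e_d)$, and write $v = v_\parallel + v_\perp$ for the splitting of a vector into its first $d$ and last $D-d$ coordinates. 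The goal is then that the last $D-d$ diagonal entries of the covariance matrix of $S\cap B$ sum to at most an $\epsilon$ fraction of its trace. Using the identity $\sum_x\|x-\mu\|^2 = \tfrac{1}{2n}\sum_{x,y}\|x-y\|^2$ (valid verbatim after applying the linear maps $v\mapsto v_\perp$ and $v\mapsto v_\parallel$, since projection commutes with averaging), both the trace and the last $D-d$ diagonal entries are rewritten as pairwise sums, and the claim reduces to the single inequality
\[
\sum_{x,y\in S\cap B}\|(x-y)_\perp\|^2 \;\le\; \epsilon \sum_{x,y\in S\cap B}\|x-y\|^2 .
\]

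So the real work is the pointwise statement: for every pair $x,y\in\M$ with $\|x-p_0\|,\|y-p_0\|\le 2\rho$ one has $\|(x-y)_\perp\|\le\sqrt{\epsilon}\,\|x-y\|$, where $\perp$ denotes projection onto $T^\perp = (T_{p_0}\M)^\perp$. I would derive this from two standard facts about a submanifold of reach $\tau$ (cf. the manifold section of \cite{rptree} and the references therein): (A) for $p,q\in\M$ the component of $q-p$ orthogonal to $T_p\M$ has norm at most $\|q-p\|^2/(2\tau)$; and (B) the largest principal angle $\theta$ between $T_p\M$ and $T_q\M$ satisfies $\sin\theta\le\|p-q\|/\tau$. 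Write $w = x-y$ and split $w = w^{\,x}_\parallel + w^{\,x}_\perp$ relative to $T_x\M$. By (A), $\|w^{\,x}_\perp\|\le\|w\|^2/(2\tau)$, and projecting onto $T^\perp$ only decreases its norm. The vector $w^{\,x}_\parallel$ lies in $T_x\M$, so its component orthogonal to $T = T_{p_0}\M$ has norm at most $\sin\theta\cdot\|w^{\,x}_\parallel\|\le(\|x-p_0\|/\tau)\,\|w\|$ by (B). Adding the two contributions and using $\|w\| = \|x-y\|\le 2\rho$ and $\|x-p_0\|\le 2\rho$,
\[
\|(x-y)_\perp\| \;\le\; \frac{\|w\|}{\tau}\left(\frac{\|w\|}{2}+\|x-p_0\|\right) \;\le\; \frac{\|w\|}{\tau}\,(\rho+2\rho) \;=\; \frac{3\rho}{\tau}\,\|x-y\| \;=\; \sqrt{\epsilon}\,\|x-y\|.
\]
The hypothesis $\epsilon\le\tfrac14$ guarantees $2\rho = \tfrac{2\sqrt{\epsilon}}{3}\tau\le\tfrac{\tau}{3}$, comfortably inside the regime in which (A) and (B) hold.

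Squaring the pointwise bound and summing over all ordered pairs of points of $S\cap B$ yields the displayed inequality, hence the first $d$ diagonal entries of the covariance matrix of $S\cap B$ (those spanning $T_{p_0}\M$) carry at least a $(1-\epsilon)$ fraction of its trace. As $B$ was arbitrary, $S$ has local covariance dimension $\bigl(d,\epsilon,\sqrt{\epsilon}\,\tau/3\bigr)$. One point worth flagging in the write-up: the isometry — equivalently the distinguished $d$-dimensional subspace — depends on the ball through the choice of $p_0$, which is the only sensible reading for a curved $\M$ and is exactly what the analysis of \RPm\ in \cite{rptree} requires (the existence, within each cell, of $d$ orthonormal directions holding a $(1-\epsilon)$ fraction of the energy).

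I expect the main obstacle to be purely quantitative: pinning down the constants in facts (A) and (B) — especially the rate at which tangent spaces rotate — sharply enough that the admissible radius comes out as $\sqrt{\epsilon}\,\tau/3$ rather than something smaller. The structural part is straightforward once two moves are in place: (i) passing from the covariance trace to the pairwise sum $\sum_{x,y}\|x-y\|^2$, which keeps the (off-manifold) mean $\mu$ out of every geometric estimate, and (ii) measuring the normal deviation of a pair $x,y$ against the tangent space at $x$ itself — so that the curvature error is genuinely second order in $\|x-y\|$ — and only afterwards transporting to the fixed subspace $T_{p_0}\M$, where the error picked up is first order in $\|x-y\|$ but controlled by the small factor $\|x-p_0\|/\tau$.
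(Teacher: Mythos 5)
Your proof is correct and takes a genuinely different route from the paper's, even though both ultimately rest on the same geometric fact from \cite{homology}. The paper confronts the difficulty that the mean $\mu$ need not lie on $\M$ head-on: it takes $q = \arg\min_{x\in\M}\|\mu - x\|$, uses that $(\mu - q)\perp T_q\M$ to replace $f(x-\mu)$ by $f(x-q)$, verifies through a short chain of triangle inequalities that $\|\mu - q\|\le r$ and hence $S\subset B(q,3r)\cap\M$, and then applies the projection lemma (Fact~\ref{proj-not-small}) with base point $q$ and radius $3r$. You sidestep the mean entirely via the identity $\sum_x\|x-\mu\|^2 = \tfrac{1}{2n}\sum_{x,y}\|x-y\|^2$ (applied coordinatewise, which is legitimate since orthogonal projection commutes with averaging), reducing everything to a purely pairwise estimate anchored at an arbitrary data point $p_0\in S\cap B$. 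This is cleaner: it removes the need for the proxy-point argument and the accompanying bookkeeping about $\|\mu-q\|$ and $\|\mu-x_0\|$, and the radius accounting becomes the single observation that all points of $S\cap B$ lie within $2\rho$ of $p_0$. Both approaches happen to land on the same constant $r=\sqrt{\epsilon}\tau/3$.

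Two small remarks. First, after the pairwise reduction your pointwise claim --- that for $x,y$ within $2\rho\le\sqrt{\epsilon}\tau$ of $p_0$ one has $\|(x-y)_\perp\|^2\le\epsilon\|x-y\|^2$ relative to $T_{p_0}\M$ --- \emph{is} Fact~\ref{proj-not-small} of the paper (Lemma~5.3 of \cite{homology}) with $p=p_0$ and radius $2\rho$. You could cite it directly rather than rederiving it from facts (A) and (B); this would discharge the one uncertainty you yourself flagged (the exact constant in the tangent-angle bound (B)), since the paper's Fact~\ref{proj-not-small} packages that estimate with the correct dependence. Second, your observation that the distinguished $d$-dimensional subspace must be allowed to depend on the ball is correct, and the paper's own proof implicitly reads the definition the same way, since the tangent space $T_q\M$ it projects onto depends on the ball through $\mu$ and hence $q$.
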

For manifolds, the local curvature decides how small a neighborhood should one take in order to expect a sense of ``flatness'' in the non-linear surface. This is quantified using the \emph{Condition Number} $\tau$ of $\M$ (introduced in \cite{homology}) which restricts the amount by which the manifold can curve locally. The condition number is related to more prevalent notions of local curvature such as the second fundamental form \cite{diff-geom} in that the inverse of the condition number upper bounds the norm of the second fundamental form \cite{homology}. Informally, if we restrict ourselves to regions of the manifold of radius $\tau$ or less, then we get the requisite flatness properties. \cite{homology} formalizes this as follows. For any hyperplane $T \subset \R^D$ and a vector $v \in \R^d$, let $\parr{v}(T)$ denote the projection of $v$ onto $T$.
\begin{figure}
	\begin{center}
		\scalebox{0.7}{
			\begin{pspicture}(0,-3.0779946)(13.3128,3.0779948)		\psbezier[linewidth=0.04](2.4728,-3.0579948)(2.7928,-1.5979948)(4.078631,-0.8658259)(4.6128,-1.6379948)(5.146969,-2.4101636)(6.8334246,-2.893331)(7.5128,-2.1379948)(8.192176,-1.3826586)(6.4528003,-0.91799474)(6.8128,0.16200523)(7.1727996,1.2420052)(7.3128,1.8020053)(6.7528,1.5820053)(6.1928,1.3620054)(5.570686,0.021776741)(4.6728,0.46200523)(3.774914,0.9022337)(5.253406,2.6660156)(4.2728,2.8620052)(3.292194,3.0579948)(3.1831093,-0.59687483)(2.1928,-0.45799476)(1.2024907,-0.31911469)(2.6324172,1.4543375)(1.6328001,1.4820052)(0.633183,1.5096729)(0.0,0.08170036)(0.052800056,-0.83799475)(0.10560011,-1.7576898)(1.1128,-2.4779947)(2.0328,-1.6379948)
				\pscircle[linewidth=0.04,linestyle=dotted,dotsep=0.10583334cm,dimen=outer](6.8728,-0.2979948){0.6}
				\pscircle[linewidth=0.04,linestyle=dotted,dotsep=0.10583334cm,dimen=outer](11.7028,-0.3079947){1.61}
				\psbezier[linewidth=0.04](12.38,-1.7420053)(11.7,-1.3220053)(10.72,0.09799469)(11.363023,1.2579947)
				\psline[linewidth=0.02cm,linestyle=dashed,dash=0.17638889cm 0.10583334cm](6.78,0.31799468)(11.34,1.2779946)
				\psline[linewidth=0.02cm,linestyle=dashed,dash=0.17638889cm 0.10583334cm](6.74,-0.88200533)(11.42,-1.8820053)
				\psline[linewidth=0.02cm](10.8,1.8179947)(11.64,-2.2620053)
				\usefont{T1}{ptm}{m}{n}
				\rput(11.412812,0.18799467){$p$}
				\usefont{T1}{ptm}{m}{n}
				\rput(11.412812,1.9479947){$T_p(\M)$}
				\usefont{T1}{ptm}{m}{n}
				\rput(5.0928125,2.1279948){$\M$}
				\pscircle*[linewidth=0.04,dimen=outer](11.17,0.04799469){0.07}
			\end{pspicture} 
		}
		\caption{Locally, almost all the energy of the data is concentrated in the tangent plane.}
		\label{manifold}
	\end{center}
\end{figure}
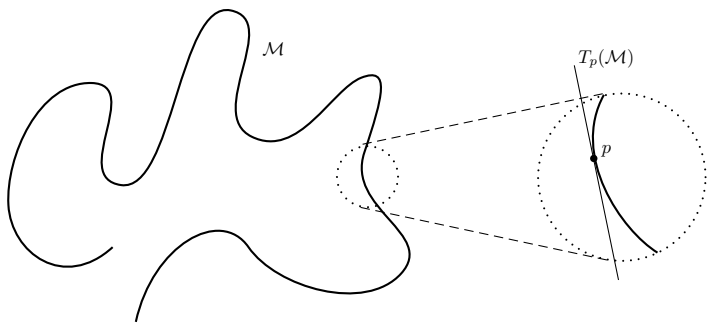
\begin{fct}[Implicit in Lemma 5.3 of \cite{homology}]
\label{proj-not-small}
Suppose $\M$ is a Riemannian manifold with condition number $\tau$. For any $p \in \M$ and $r \leq \sqrt{\epsilon}\tau, \epsilon \leq \frac{1}{4}$, let $\M' = B(p,r) \cap \M$. Let $T = T_p(\M)$ be the tangent space at $p$. Then for any $x,y \in \M'$, $\|\parr{x}(T) - \parr{y}(T)\|^2 \geq (1 - \epsilon)\|x - y\|^2$.
\end{fct}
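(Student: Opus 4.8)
The plan is to peel back the definition of local covariance dimension and reduce the statement to a purely pairwise inequality, which is then exactly what Fact~\ref{proj-not-small} supplies. For a finite point set $A \subset \R^D$ write $\text{en}(A) = \sum_{x \in A}\|x - \text{mean}(A)\|^2$ for its energy and, for a linear subspace $V$, $\text{en}_V(A) = \sum_{x\in A}\|\parr{x}(V) - \text{mean}(\{\parr{z}(V): z\in A\})\|^2$ for the energy captured by $V$; since orthogonal projection onto $V$ is linear, $\text{en}_V(A)$ is just the energy of the projected set and the projected mean equals the projection of the mean. The trace of the covariance matrix of $A$ equals $\text{en}(A)/|A|$ and is isometry-invariant, as is the sum of its $d$ largest eigenvalues; moreover, if $V$ is $d$-dimensional then choosing the isometry $M$ to be a rotation sending an orthonormal basis of $V$ to the first $d$ coordinate axes makes the first $d$ diagonal entries of the rotated covariance matrix sum to exactly $\text{en}_V(A)/|A|$. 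Hence it suffices to show: for every ball $B$ of radius $r = \sqrt{\epsilon}\tau/3$ there is a $d$-dimensional subspace $V$ with $\text{en}_V(S\cap B) \ge (1-\epsilon)\,\text{en}(S\cap B)$.

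I would take $V$ to be a tangent plane. Fix a ball $B$ of radius $r$; if $|S\cap B| \le 1$ the covariance matrix vanishes and there is nothing to prove, so assume $|S\cap B|\ge 2$ and pick any $p \in S\cap B$. Since $S \subset \M$ we have $p \in \M$, so $V = T_p(\M)$ is a well-defined $d$-dimensional subspace. Every $x \in S\cap B$ satisfies $\|x-p\| \le 2r = \tfrac23\sqrt{\epsilon}\tau \le \sqrt{\epsilon}\tau$ because $B$ has radius $r$ and $p \in B$, hence $S\cap B \subseteq B(p,\sqrt{\epsilon}\tau)\cap\M$. Applying Fact~\ref{proj-not-small} with centre $p$ and radius $\sqrt{\epsilon}\tau$ then gives $\|\parr{x}(V) - \parr{y}(V)\|^2 \ge (1-\epsilon)\|x-y\|^2$ for all $x,y\in S\cap B$.

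The last step is to turn this pairwise statement into the energy statement via the identity $\text{en}(A) = \frac{1}{2|A|}\sum_{x,y\in A}\|x-y\|^2$, valid for any finite $A$. Applying it to $A = S\cap B$, and (using linearity of the projection) also to the projected set, whose pairwise distances are $\|\parr{x}(V) - \parr{y}(V)\|$, then summing the inequality of the previous paragraph over all ordered pairs $x,y \in S\cap B$ and dividing by $2|S\cap B|$, yields $\text{en}_V(S\cap B) \ge (1-\epsilon)\,\text{en}(S\cap B)$, which is what the first paragraph reduced the theorem to.

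The only point needing real care is the first step: making explicit the dictionary between ``some $d$ diagonal elements of the covariance matrix account for a $(1-\epsilon)$ fraction of the trace'' and ``some $d$-dimensional subspace captures a $(1-\epsilon)$ fraction of the energy'', and observing that, although the relevant subspace (a tangent plane) changes from ball to ball, this is exactly what the definition permits. The remaining ingredients — a one-line triangle inequality to land inside the radius $\sqrt{\epsilon}\tau$ on which Fact~\ref{proj-not-small} is valid (the constant $\tfrac13$ is more generous than necessary, $\tfrac12$ would already do), and the standard variance/mean-pairwise-distance identity — are routine.
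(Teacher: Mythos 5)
Your proposal does not actually prove the statement in question. The statement to be established is Fact~\ref{proj-not-small} itself: the purely differential-geometric claim that, on a manifold of condition number $\tau$, orthogonal projection onto the tangent space $T_p(\M)$ contracts distances between points of $B(p,\sqrt{\epsilon}\tau)\cap\M$ by at most a factor $(1-\epsilon)$. Your argument invokes exactly this claim as a black box (``Applying Fact~\ref{proj-not-small} with centre $p$ and radius $\sqrt{\epsilon}\tau$ then gives \dots''), and spends all of its effort on the surrounding reduction from covariance-matrix traces to energies and on the pairwise-distance identity. Relative to the assigned statement this is circular: nothing in the proposal engages with the definition of condition number, with how the reach $\tau$ controls the angle between a chord $x-y$ of $\M'$ and the tangent space at $p$, or with why the bound degrades only like $\epsilon$ when the radius is $\sqrt{\epsilon}\tau$. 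That is precisely the content one must extract from Lemma 5.3 of the Niyogi--Smale--Weinberger paper (which bounds the angle between chords and tangent planes in terms of $\|x-y\|/\tau$, so that $\|\parr{x}(T)-\parr{y}(T)\| = \|x-y\|\cos\theta$ with $\sin\theta = \OO{\sqrt{\epsilon}}$); the paper itself does not reprove it but cites it, and a self-contained proof would have to supply that geometric step, which your writeup omits entirely.

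As a secondary remark: what you have written is, in substance, a proof of Theorem~\ref{loc-cov}, and as such it is correct and in one respect nicer than the paper's. By centering the tangent plane at an arbitrary data point $p\in S\cap B$ and converting energies to pairwise distances via $\sum_{x\in A}\|x-\text{mean}(A)\|^2 = \frac{1}{2|A|}\sum_{x,y\in A}\|x-y\|^2$, you avoid the paper's ``proxy point'' argument (the closest point $q\in\M$ to the mean, perpendicularity of $\mu-q$ to $T_q(\M)$, and two applications of $\E{\|X-v\|^2}\geq\E{\|X-\E{X}\|^2}$), and you even obtain the radius $\sqrt{\epsilon}\tau/2$ that the paper's conclusion only conjectures. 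But that theorem is downstream of the Fact you were asked to prove, so the assigned statement remains unproved.
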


This already seems to give us what we want - a large fraction of the length between any two points on the manifold lies in the tangent plane - i.e. in $d$ dimensions. However in our case we have to show that for some $d$-dimensional plane $P$, $\sum_{x\in S}{\|\parr{(x - \mu)}(P)\|^2} > (1- \epsilon)\sum_{x\in S}{\|x - \mu\|^2}$ where $\mu = mean(S)$. The problem is that we cannot apply Fact~\ref{proj-not-small} since there is no surety that the mean will lie on the manifold itself. However it turns out that certain points on the manifold can act as ``proxies'' for the mean and provide a workaround to the problem.
\begin{proof}(of Theorem~\ref{loc-cov})
Suppose $\M' = B(x_0,r) \cap \M$ for $r = \frac{\sqrt{\epsilon}\tau}{3}$ and we are given data points $S = \{x_1,\ldots x_n\} \subset \M'$. Let $q = \mathop{\arg\min}\limits_{x \in \M}{\|\mu - x\|}$ be the closest point on the manifold to the mean. The smoothness properties of $\M$ tell us that the vector $(\mu - q)$ is perpendicular to $T_q(\M)$, the $d$-dimensional tangent space at $q$ (in fact any point $q$ at which the function $g: x \in \M \longmapsto \|x - \mu\|$ attains a local extrema would also have the same property). This has interesting consequences - let $f$ be the projection map onto $T_q(\M)$ i.e. $f(v) = \parr{v}(T_q(\M))$.

Then $f(\mu - q) = 0$ since $(\mu - q)\ \bot\ T_q(\M)$. This implies that for any vector $v \in \R^D$, $f(v - \mu) = f(v - q) + f(q - \mu) =  f(v - q) = f(v) - f(q)$ since $f$ is a linear map. We now note that $\underset{i}\min \|\mu - x_i\| \leq r$. If this were not true then we would have $\sum\limits_i\|\mu - x_i\| > nr^2$ whereas we know that $\sum\limits_i\|\mu - x_i\| \leq \sum\limits_i\|x_0 - x_i\| \leq nr^2$ since for any random variable $X \in \R^D$ and fixed $v \in \R^D$, we have $\E{\|X - v\|^2} \geq \E{\|X - \E{X}\|^2}$. Since $\|\mu - x_i\| \leq r$ for some $x_i \in \M$, we know, by definition of $q$, that $\|\mu - q\| \leq r$ as well.

We also have $\|\mu - x_0\| \leq r$ (since the convex hull of the points is contained in the ball $B$ and the mean, being a convex combination of the points, is contained in the hull) and $\|x_i - x_0\| \leq r$ for all points $x_i$. Hence we have for any point $x_i$, $\|x_i - q\| \leq \|x_i - x_0\| + \|x_0 - \mu\| + \|\mu - q\| \leq 3r$ and conclude that $S \subset B(q,3r) \cap \M = B(q,\sqrt{\epsilon}\tau) \cap \M$ which means we can apply Fact~\ref{proj-not-small} between the vectors $x_i$ and $q$.

Let $T = T_q(\M)$ and $q$ as chosen above. We have
\begin{eqnarray*}
\sum\limits_{x \in S}{\|\parr{(x-\mu)}(T)\|^2} &=& \sum\limits_{x \in S}{\|f(x - \mu)\|^2} = \sum\limits_{x \in S}{\|f(x - q)\|^2} = \sum\limits_{x \in S}{\|f(x) - f(q)\|^2}\\
&\geq& \sum\limits_{x \in S}{(1 - \epsilon)\|x - q\|^2} \geq (1 - \epsilon)\sum\limits_{x \in S}{\|x - \mu\|^2}
\end{eqnarray*}
where the last inequality again uses the fact that for a random variable $X \in \R^D$ and fixed $v \in \R^D$, $\E{\|X - v\|^2} \geq \E{\|X - \E{X}\|^2}$.
\end{proof}

\section{Conclusion}
In this paper we considered the two random projection trees proposed in \cite{rptree}. For the \RPM\ data structure, we provided an improved bound (Theorem~\ref{strong-gen}) on the number of levels required to decrease the size of the tree cells by any factor $s \geq 2$. However the bound we proved is poly-logarithmic in $s$. It would be nice if this can be brought down to logarithmic since it would directly improve the packing lemma (Theorem~\ref{packing-thm}) as well. More specifically the packing bound would become $\left(\frac{R}{r}\right)^{\OO{1}}$instead of $\left(\frac{R}{r}\right)^{\OO{\log \frac{R}{r}}}$ for fixed $d$.

As far as dependence on $d$ is concerned, there is room for improvement in the packing lemma. We have shown that the smallest cell in the \RPM\ that completely contains a fixed ball $B$ of radius $R$ has an aspect ratio no more than $\OO{d\sqrt{d}\log d}$ since it has a ball of radius $R$ inscribed in it and can be circumscribed by a ball of radius no more than $\OO{Rd\sqrt{d}\log d}$. Any improvement in the aspect ratio of the smallest cell that contains a given ball will also directly improve the packing lemma.

Moving on to our results for the \RPm, we demonstrated that it adapts to manifold dimension as well. However the constants involved in our guarantee are pessimistic. For instance, the radius parameter in the local covariance dimension is given as $\frac{\sqrt{\epsilon}\tau}{3}$ - this can be improved to $\frac{\sqrt{\epsilon}\tau}{2}$ if one can show that there will always exists a point $q \in B(x_0,r) \cap \M$ at which the function $g: x \in \M \longmapsto \|x - \mu\|$ attains a local extrema.

We conclude with a word on the applications of our results. As we already mentioned, packing lemmas and size reduction guarantees for arbitrary factors are typically used in applications for nearest neighbor searching and clustering. However, these applications (viz \cite{bbdtree}, \cite{clust-bbdtree}) also require that the tree have bounded depth. The \RPM\ is a pure space partitioning data structure that can be coerced by an adversarial placement of points into being a primarily left-deep or right-deep tree having depth $\Omega(n)$ where $n$ is the number of data points.

Existing data structures such as BBD Trees remedy this by alternating space partitioning splits with data partitioning splits. Thus every alternate split is forced to send at most a constant fraction of the points into any of the children thus ensuring a depth that is logarithmic in the number of data points. \cite{reg-rptree} also uses a similar technique to bound the depth of the version of \RPM\ used in that paper. However it remains to be seen if the same trick can be used to bound the depth of \RPM\ while maintaining the packing guarantees because although such ``space partitioning'' splits do not seem to hinder Theorem~\ref{strong-gen}, they do hinder Theorem~\ref{packing-thm} (more specifically they hinder Theorem~\ref{ball-go-down}).

We leave open the question of a possible augmentation of the \RPM\ structure, or a better analysis, that can simultaneously give the following guarantees :
\begin{enumerate}
	\item \textbf{Bounded Depth} : depth of the tree should be $o(n)$, preferably $(\log n)^{\OO{1}}$
	\item \textbf{Packing Guarantee} : of the form $\left(\frac{R}{r}\right)^{\left(d\log \frac{R}{r}\right)^{\OO{1}}}$
	\item \textbf{Space Partitioning Guarantee} : assured size reduction by factor $s$ in $(d\log s)^{\OO{1}}$ levels
\end{enumerate}

\subsubsection*{Acknowledgments}
The authors thank James Lee for pointing out an incorrect usage of the term Assouad dimension in a previous version of the paper. Purushottam Kar thanks Chandan Saha for several fruitful discussions and for his help with the proofs of the Theorems~\ref{strong-gen}~and~\ref{packing-thm}. Purushottam is supported by the Research~I Foundation of the Department of Computer Science and Engineering, IIT Kanpur.

\begin{small}
\bibliography{ref}
\end{small}

\appendix

\end{document}